\documentclass[conference]{IEEEtran}
\usepackage{hyperref} 
\usepackage{amsmath,amssymb,amsthm,cite,graphicx,array}
\theoremstyle{definition}
\newtheorem{theorem}{Theorem}
\newtheorem{lemma}{Lemma}
\newtheorem{definition}{Definition}

\newtheorem{remark}{Remark}
\newtheorem{example}{Example}
\usepackage{color}

\title{Optimal Scalar Linear Index Codes for One-Sided Neighboring Side-Information Problems}
\begin{document}
\author{Mahesh~Babu~Vaddi~and~B.~Sundar~Rajan\\ 
 Department of Electrical Communication Engineering, Indian Institute of Science, Bengaluru 560012, India \\ E-mail:~\{mahesh,~bsrajan\}@ece.iisc.ernet.in }
\maketitle
\begin{abstract}
The capacity of symmetric instance of the multiple unicast index coding problem with neighboring antidotes (side-information) with number of messages equal to the number of receivers was given by Maleki \textit{et al.}  In this paper, we construct matrices of size  $m \times n~(m \geq n)$ over $\mathbb{F}_q$ such that any $n$ adjacent rows of the matrix are linearly independent. By using such matrices, we give an optimal scalar linear index codes over $\mathbb{F}_q$ for the symmetric one-sided antidote problems considered by Maleki \textit{et al.} for any given number of messages and one-sided antidotes. The constructed codes are independent of field size and hence works over every field.
\end{abstract}
\section{Introduction and Background}
\label{sec1}
An index coding problem consists of one transmitter, $M$ receivers $\{R_1,R_2,\ldots,R_M\}$ and $K$ independent messages 
$\{x_1,x_2,\ldots,x_K\}$, where $x_i\in\mathbb{F}_q^{p_i},~ x_i=(x_{i,1},x_{i,2},\ldots,x_{i, p_i }),~x_{i,j} \in \mathbb{F}_q$ for 
$i \in \{1,2,\ldots,K \}$ and $j \in \{1,2,\ldots,p_i\}$. Each receiver $R_i$ is identified with $\{\mathcal{W}_i,\mathcal{K}_i\}$, where $\mathcal{W}_i \subseteq \{x_1,x_2,\ldots,x_K\}$ is the set of wanted messages and $\mathcal{K}_i \subseteq \{x_1,x_2,\ldots,x_K\}$ is the set of known messages to receiver $R_i$. The messages in the set $\mathcal{K}_i$ are also called side information or antidotes to receiver $R_i$. The transmitter has all the $K$ messages and it also knows the set of wanted and known messages of each receiver. An index code is a mapping defined as follows:
\begin{align*}
\mathfrak{C}: \mathbb{F}^{p_1+p_2+\ldots+p_K}_q \rightarrow \mathbb{F}^N_q,
\end{align*}
where $N$ is the length of index code. That is, the index code $\mathfrak{C}$ maps $K$ messages $x_1,x_2,\ldots,x_K$ into $N$ code symbols $y_1$,$y_2,\ldots$,$y_N$ ($y_i\in \mathbb{F}_q$ for $i=1,2,\ldots,N$). If $p_1=p_2=\cdots=p_K$, then the index code is called symmetric rate vector index code. If $p_1=p_2=\cdots=p_K=1$, then the index code is called scalar index code. The index coding problem is to design an index code such that the number of transmissions broadcasted by the transmitter is minimized and all the receivers get their wanted messages by using the index code broadcasted and their known information.

Instead of one transmitter and $M$ receivers, the index coding problem can also be viewed as $M$ source-receiver pairs with all $M$ sources connected with all $M$ receivers through a common finite capacity channel \cite{MCJ} and all source-receiver pairs connected with either zero of infinite capacity channels. This problem is called multiple unicast index coding problem (MUICP).

The problem of index coding with side information was introduced by Birk and Kol \cite{BiK}. Bar-Yossef \textit{et al.} \cite{YBJK} studied the class of index coding problems in which each receiver demands only a unique message and the number of receivers equals the number of messages. Ong and Ho \cite{OnH} classified binary index coding problem depending on the demands and side information of the receivers. An index coding problem is called unicast if the demand sets of the receivers are disjoint. 
For the unicast index coding problem, it was shown that the length of an optimal linear index code is equal to the minrank of the side information graph \cite{YBJK} of the index coding problem and finding the minrank is NP hard \cite{minrank}. 

Maleki \textit{et al.} \cite{MCJ} found the capacity of symmetric MUICP with neighboring antidotes. In a symmetric MUICP with equal number of $K$ messages and source-receiver pairs, each receiver has a total of $U+D=A<K$ antidotes, corresponding to the $U$ messages before and $D$ messages after its desired message. In this setting, the $k$th receiver $R_{k}$ demands the message $x_{k}$ having the antidotes
\begin{equation}
\label{antidote}
{\cal K}_k= \{x_{k-U},\dots,x_{k-2},x_{k-1}\}\cup\{x_{k+1}, x_{k+2},\dots,x_{k+D}\}.
\end{equation}

The symmetric capacity of this index coding problem setting is
\begin{flushleft}
$C=\left\{
                \begin{array}{ll}  
                  {~~~1\qquad\quad\ ~~~~  \mbox{if}~~~~A=K-1}\\
                  {\frac{U+1}{K-A+2U}} ~~~~~~~ \mbox{if}~~~~A\leq K-2\qquad $per message,$
                  \end{array}
              \right.$
              \end{flushleft}
               where $U,D \in$ $\mathbb{Z},$ $0 \leq U \leq D$, and $U+D=A<K$.
\vspace{-10pt}
\begin{align}
\label{capacity}
\end{align}

In the setting given in \cite{MCJ} with one-sided antidote cases, i.e., the cases where $U$  is zero, 
the $k$th receiver $R_{k}$ demands the message $x_{k}$ having the antidotes,
\begin{equation}
\label{antidote1}
{\cal K}_k =\{x_{k+1}, x_{k+2},\dots,x_{k+D}\}, 
\end{equation}
\noindent
for which \eqref{capacity} reduces to
\begin{equation}
\label{capacity1}
C=\left\{
                \begin{array}{ll}
                  {~~1 ~~~~~~~~~~ \mbox{if} ~~ D=K-1}\\
                  {\frac{1}{K-D}} ~~~~~~~ \mbox{if} ~~D\leq K-2  ~~~~~\text{per message}. 
                  \end{array}
              \right.
\end{equation}

The set of messages neither known nor demanded by $R_k$ contribute interference ($\mathcal{I}_k$) at the receiver $R_k$, where
\begin{align}
\label{interference}
\mathcal{I}_k=\{x_{1},x_{2},\ldots,x_{k-1},x_{k+D+1},x_{k+D+2},\ldots,x_K\}.
\end{align}
 
The reciprocal of the capacity is called the optimal length of an index coding problem. That is, at least $\frac{1}{C}$ code symbols are required to convey one wanted message to each receiver. 

In a scalar linear code, the messages in an index coding problem take value from a finite field $\mathbb{F}_q$. A $K$-tuple $(x_1,x_2,\ldots,x_K)\in\mathbb{F}_q^K$ of messages is denoted by $\mathbf{x}$. A scalar linear index code of length $N$ $(<K)$ is represented by an encoding matrix $\mathbf{L}$ $(\in \mathbb{F}_q^{K\times N})$, where the $j$th column contains the coefficients used for mixing messages $x_1,x_2,\ldots,x_K$ to get the $j$th code symbol and the $i$th row $L_i$ $(\in \mathbb{F}_q^{1\times N})$ contains the coefficients used for mixing message $x_i$ in the $N$ code symbols. A codeword of the index code is 
\begin{align*}
 [y_1~y_2~\ldots~y_N]=\mathbf{xL}=\sum_{i=1}^{K}x_iL_i.
\end{align*}

In this paper, the $m\times m$ identity matrix is denoted by $\mathbf{I}_{m}$. For a subset $I=\{i_1,i_2,\ldots,i_l\} \subseteq \{1,2,\ldots,K\}$, let $x_I=\{x_{i_1},x_{i_2},\ldots,x_{i_l}\}$ and $L_I=\{L_{i_1},L_{i_2},\ldots,L_{i_l}\}$. All the subscripts in this paper are to be considered $modulo \ K$. The set of rows $\{L_{k+1},L_{k+2},\cdots,L_{k+N}\}$  for $k=1,2,\ldots,K$ (all subscripts are $modulo \ K$) are called adjacent rows in the matrix $L_{K \times N}$.
\subsection{Contributions}
\begin{itemize}
\item In this paper, we give a construction of $0-1$ (binary) matrices with a given size $m \times n~(m \geq n)$, such that any $n$ adjacent rows in the matrix are linearly independent over every field $\mathbb{F}_q$. 
\medskip
\item For the neighboring antidote symmetric MUICP, Maleki \textit{et al.} \cite{MCJ} proved the existence of capacity achieving codes by using Vandermonde matrices over large fields. The size of the field in their construction depends on the number of messages $K$. In this paper, using the proposed matrix construction, we give capacity achieving scalar linear codes for given $K$ and $D$ over every field $\mathbb{F}_q$, which is independent of $K$. 
\medskip
\item In \cite{MRarXiv}, we proposed a vector linear index code construction that constructs a sequence of MUICPs with two-sided antidotes with a vector linear index code starting from a given one-sided antidote MUICP with a known scalar linear index code. The construction given in this paper along with the construction in \cite{MRarXiv} gives a capacity achieving vector linear code for two-sided neighboring antidote problems with every $K,U$ and $D$ and the constructed codes are independent of field size.
\medskip

\end{itemize}

In \cite{MRRarXiv}, we proposed the construction of capacity achieving scalar linear index codes for one-sided antidote problem for $K$ and $D$ satisfying some conditions. In this paper, we construct capacity achieving scalar linear index codes for arbitrary $K$ and $D$ and the constructed codes are independent of field size.


\section{Construction of the optimal length index codes over $\mathbb{F}_q$}
In this section, we give a method to construct a $K \times (K-D)$ encoding matrix for one-sided neighboring antidote MUICP with $K$ messages and $D$ antidotes. 
\begin{lemma}
\label{lemma1}
Consider a MUICP with $K$ messages and  $K$ receivers. Receiver $R_k$ wants the message $x_k$ and its antidotes and interference are given by \eqref{antidote1} and  \eqref{interference} respectively. Let $\mathbf{L}$ be a $K\times (K-D)$ encoding matrix for this index coding problem. Then, the receiver $R_k$ can decode $x_k$ if and only if 
\begin{align} 
\label{ind}
L_{k}\notin \mathsf{span}\; L_{{\cal{I}}_k} ~~ \text{for~each} ~~k\in\{1,2,\ldots,K\},
\end{align}
 where $L_{{\cal{I}}_k}=\{L_{1},L_2,\ldots,L_{k-1},L_{k+D+1},L_{k+D+2},\ldots,L_{K}\}.$
\end{lemma}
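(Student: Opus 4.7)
The plan is to recognize that Lemma~\ref{lemma1} is an instance of the standard linear decodability criterion for linear index codes, specialized to the one-sided neighboring antidote setting. I would first set up the received vector at $R_k$, then prove the two directions separately: sufficiency by exhibiting an explicit decoding functional, and necessity by constructing two distinct message vectors that produce the same codeword while agreeing on the antidotes of $R_k$ but disagreeing on $x_k$.

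First, I would write the broadcast codeword as $\mathbf{y} = \mathbf{x}\mathbf{L} = \sum_{i=1}^K x_i L_i$ and observe that $R_k$, knowing $x_{k+1},\ldots,x_{k+D}$, can form
\begin{equation*}
\mathbf{y}' \;=\; \mathbf{y} - \sum_{j=1}^{D} x_{k+j} L_{k+j} \;=\; x_k L_k + \sum_{i \in \mathcal{I}_k} x_i L_i,
\end{equation*}
where $\mathcal{I}_k$ is defined in \eqref{interference}. Thus decoding $x_k$ reduces to recovering $x_k$ from $\mathbf{y}'$ for every possible choice of the interference symbols $\{x_i : i \in \mathcal{I}_k\}$.

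For the ``if'' direction, assuming $L_k \notin \mathsf{span}\; L_{\mathcal{I}_k}$, I would invoke a standard linear algebra fact: there exists a column vector $\mathbf{v} \in \mathbb{F}_q^{(K-D) \times 1}$ with $L_i \mathbf{v} = 0$ for all $i \in \mathcal{I}_k$ and $L_k \mathbf{v} = 1$. (Such a $\mathbf{v}$ exists precisely because adding $L_k$ to the span of $L_{\mathcal{I}_k}$ strictly enlarges it, so the annihilator of $L_{\mathcal{I}_k}$ is not contained in the annihilator of $L_k$.) Multiplying $\mathbf{y}'$ on the right by $\mathbf{v}$ yields $x_k$, giving an explicit linear decoder.

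For the ``only if'' direction, I would prove the contrapositive: if $L_k = \sum_{i \in \mathcal{I}_k} \alpha_i L_i$ for some $\alpha_i \in \mathbb{F}_q$, I construct two message tuples $\mathbf{x}$ and $\mathbf{x}'$ that agree on all coordinates in $\{k+1,\ldots,k+D\}$ (the known set of $R_k$), differ in the $k$th coordinate, and produce the same codeword. Concretely, take $\mathbf{x}$ to be all zeros and $\mathbf{x}'$ to have $x'_k = 1$, $x'_i = -\alpha_i$ for $i \in \mathcal{I}_k$, and $x'_j = 0$ on the antidote coordinates; then $\mathbf{x}'\mathbf{L} - \mathbf{x}\mathbf{L} = L_k - \sum_{i \in \mathcal{I}_k} \alpha_i L_i = 0$. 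Since the codewords and known messages at $R_k$ coincide while $x_k \neq x'_k$, no decoder can distinguish the two cases. I expect no substantial obstacle here; the only care needed is to keep indices modulo $K$ straight and to verify that the dependency, known set, and interference set partition $\{1,\ldots,K\}$ cleanly so that the two constructed message vectors are well-defined and really do agree on the known coordinates.
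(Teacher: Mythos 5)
Your proposal is correct and follows essentially the same route as the paper: both subtract the known antidote contribution to reduce the received vector to $z = x_k L_k + x_{\mathcal{I}_k}L_{\mathcal{I}_k}$ and then identify decodability of $x_k$ with $L_k \notin \mathsf{span}\, L_{\mathcal{I}_k}$. Your write-up is in fact somewhat more rigorous than the paper's — the explicit annihilating functional $\mathbf{v}$ for sufficiency and the confusable message pair for necessity make precise what the paper asserts more loosely as "$a_k$ is unique" and "$a_k$ will no longer be unique."
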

\begin{proof}
The received vector $y$ can be written as
\begin{align}
\nonumber
y&=x_{{\cal{K}}_k}L_{{\cal{K}}_k}+x_kL_k+x_{{\cal{I}}_k}L_{{\cal{I}}_k}, \text{ or}
\\ z&=y-x_{{\cal{K}}_k}L_{{\cal{K}}_k}=x_kL_k+x_{\mathcal{I}_k}L_{\mathcal{I}_k},
\label{decoding}
\end{align}
where $z$ can be computed by $R_k$ using its antidotes $x_{{{\cal{K}}_k}}$. 

Assume that \eqref{ind} is satisfied for $k$. This implies that $L_k$ is not in the span of $\{L_{1},L_2,\ldots,L_{k-1},L_{k+D+1},\ldots,L_{K}\}$. Then, $z$ can be expressed as the following linear combination
\begin{align*}
z=&a_kL_{k}+a_{1}L_{1}+a_{2}L_{2}+\ldots+a_{k-1}L_{k-1} 
\\&+a_{k+D+1}L_{k+D+1}+a_{k+D+2}L_{k+D+2}+\ldots+a_{K}L_{K},
\end{align*}
where $a_k$ is unique and $x_k=a_k$. 

If, on the contrary, $L_{k}$ does not satisfy \eqref{ind}, then $a_{k}$ will no longer be unique and consequently $x_{k}$ can not be decoded by $R_k$. This implies that $\mathbf{L}$ is not an index code encoding matrix which contradicts the assumption. This completes the proof.
\end{proof}
\begin{lemma}
\label{lemma2}
In the index coding problem mentioned in Lemma \ref{lemma1}, if every $K-D$ adjacent rows of the encoding matrix $\mathbf{L}$ are linearly independent, then the receiver $R_k$ can decode $x_k$ and all $K-D-1$ interfering messages in ${{\cal{I}}_k}$ for $k=1,2,\ldots,K$.
\end{lemma}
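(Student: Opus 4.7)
The plan is to leverage the decoding identity already established in Lemma \ref{lemma1} and upgrade it from ``decode $x_k$'' to ``decode $x_k$ and all of $\mathcal{I}_k$'' by exploiting the stronger hypothesis on $\mathbf{L}$. The whole proof should reduce to a single linear-algebra observation: the unknowns appearing after antidote cancellation are multiplied by a block of $K-D$ cyclically adjacent rows, and such a block is, by assumption, a basis.

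First I would repeat the calculation from \eqref{decoding}: after $R_k$ subtracts the contribution of its antidotes $x_{\mathcal{K}_k}$ from the received vector, it obtains
\[
z \;=\; x_k L_k + \sum_{i \in \mathcal{I}_k} x_i L_i.
\]
Next I would identify the set of rows appearing on the right. They are $L_k$ together with $L_{\mathcal{I}_k} = \{L_1,\ldots,L_{k-1},L_{k+D+1},\ldots,L_K\}$. Listed in cyclic order starting from $k+D+1$, these indices are $k+D+1,\,k+D+2,\,\ldots,\,K,\,1,\,2,\,\ldots,\,k-1,\,k$, i.e., the block $\{k+D+1,\,k+D+2,\,\ldots,\,k+D+(K-D)\}$ taken modulo $K$. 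That is exactly $K-D$ \emph{adjacent} rows of $\mathbf{L}$ in the sense defined in Section \ref{sec1}.

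By hypothesis these $K-D$ adjacent rows are linearly independent, and since $\mathbf{L}$ has only $K-D$ columns, they form a basis of $\mathbb{F}_q^{K-D}$. Consequently, the expression of $z$ as a linear combination of these rows is unique, so each coefficient---namely $x_k$ and every $x_i$ with $i \in \mathcal{I}_k$---is uniquely determined by $z$. Thus $R_k$ recovers $x_k$ together with all $K-D-1$ interfering messages in $\mathcal{I}_k$.

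The only point requiring care is the cyclic-adjacency observation; once the index set $\{k+D+1,\ldots,K,1,\ldots,k-1,k\} \pmod{K}$ is recognized as a consecutive block of length $K-D$, the conclusion follows at once from the uniqueness of coordinates with respect to a basis. I do not foresee any other obstacle, and the argument requires no case analysis on the position of $k$.
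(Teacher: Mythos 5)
Your proof is correct and follows essentially the same route as the paper's: cancel the antidotes to get \eqref{decoding}, observe that the remaining unknowns $x_k$ and $x_{\mathcal{I}_k}$ are multiplied by exactly the $K-D$ cyclically adjacent rows $L_{k+D+1},\ldots,L_{k}$, and invoke their linear independence to conclude the system has a unique solution. You are in fact slightly more explicit than the paper, which asserts without elaboration that \eqref{decoding} is $K-D$ independent equations in $K-D$ unknowns, whereas you spell out the cyclic-adjacency identification that justifies applying the hypothesis.
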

\begin{proof}
If every $K-D$ adjacent rows of the encoding matrix $\mathbf{L}$ are linearly independent, then \eqref{decoding} is a set of $K-D$ linearly independent equations with $K-D$ unknowns. The unknowns in \eqref{decoding} are $x_k$ and $K-D-1$ interfering messages in ${{\cal{I}}_k}$. Hence, by solving $K-D$ equations in \eqref{decoding}, the receiver $R_k$ can decode $x_k$ and all $K-D-1$ interfering messages in ${{\cal{I}}_k}$ for $k=1,2,\ldots,K$.
\end{proof}
Rectangular circulant matrix defined below is used in the construction of matrices with a given size $m \times n~(m \geq n)$, such that any $n$ adjacent rows in the matrix are linearly independent over every field $\mathbb{F}_q$.
\begin{definition}
\label{def1}
Let $\lambda$ and $\mu$ be two positive integers and $\lambda$ divides $\mu$. The following rectangular circulant matrix is denoted by $\mathbf{C}_{\mu \times \lambda}$. 
\end{definition}
$$\mathbf{C}_{\mu \times \lambda}=\left.\left[\begin{array}{*{20}c}
   \mathbf{I}_\lambda  \\
   \mathbf{I}_\lambda  \\
   \vdots  \\
   \mathbf{I}_\lambda   \\
   \mathbf{I}_\lambda 
   \end{array}\right]\right\rbrace \frac{\mu}{\lambda}~\text{number~of}~ \mathbf{I}_\lambda~\text{matrices}$$

If $\lambda=\mu$, then $\mathbf{C}_{\mu \times \lambda}=\mathbf{I}_{\lambda}$. In the matrix $\mathbf{C}_{\mu \times \lambda}$, every set of $\lambda$ adjacent rows are linearly independent over every field $\mathbb{F}_q$. In the rectangular circulant matrix $\mathbf{C}_{\mu \times \lambda}^{\mathsf{T}}$, every set of $\lambda$ adjacent columns are linearly independent over every field $\mathbb{F}_q$. Let the matrix $\mathbf{D}_{\mu \times (t\mu+\lambda)}$ (for some integer $t$) be the column concatenation of $\mathbf{C}_{\mu \times \lambda}$ with $t$ identity matrices  $\mathbf{I}_{\mu}$. In the matrix $\mathbf{D}_{\mu \times (t\mu+\lambda)}$, every set of $\mu$ adjacent columns are linearly independent over every field $\mathbb{F}_q$.
%

\subsection{Construction of an encoding matrix $\mathbf{L}_{K \times (K-D)}$}
In this subsection we present our construction of encoding matrices which will be referred as {\bf CONSTRUCTION} henceforth. 

\begin{center}
{\bf CONSTRUCTION}
\end{center}

For a given $K$ and $D$ let  
\begin{align}
\nonumber
&\lambda_1=(K-D)~modulo~D,
\nonumber
\\& \lambda_2=D~modulo~\lambda_1,
\nonumber
\\& \lambda_3= \lambda_1~modulo~\lambda_2,
\nonumber
\\& ~~~~\cdots
\nonumber
\\& \lambda_i= \lambda_{i-2}~modulo~\lambda_{i-1}
\nonumber
\\& ~~~~\cdots
\nonumber
\\& \lambda_{l}= \lambda_{l-2}~modulo~\lambda_{l-1}
\label{chain}
\end{align}

\noindent
where $\lambda_l$ divides $ \lambda_{l-1}$ for some integer $l.$

Depending upon whether $l$ is even or odd we have the following two cases.

{\bf Case I: $l$ is an even integer}
For this case the structure of the encoding matrix $\mathbf{L}_{K \times (K-D)}$ in terms of rectangular circulant matrices is shown in Fig. \ref{fig1} and in Fig. \ref{fig3}. Both these figures give the same encoding matrix.

\begin{itemize}
\item Start with the rectangular circulant matrix $\mathbf{C}_{\lambda_{l-1} \times \lambda_{l}}^{\mathsf{T}}\triangleq\mathbf{D}_{\lambda_l \times \lambda_{l-1}}^{(1)}$ (this is a rectangular circulant matrix because $\lambda_l$ divides $\lambda_{l-1}$).
\item Concatenate the rectangular circulant matrix $\mathbf{C}_{(\lambda_{l-2}-\lambda_l)\times \lambda_{l-1}}$ ($\lambda_{l-1}$ divides $(\lambda_{l-2}-\lambda_l)$) to the rows of the matrix $\mathbf{D}_{\lambda_l \times \lambda_{l-1}}^{(1)}$ to obtain the concatenated matrix of size $\lambda_{l-2} \times \lambda_{l-1}$. Let this concatenated matrix be $\mathbf{D}_{\lambda_{l-2} \times \lambda_{l-1}}^{(2)}.$
\item Concatenate the rectangular circulant matrix $\mathbf{C}_{(\lambda_{l-3}-\lambda_{l-1})\times \lambda_{l-2} }^{\mathsf{T}}$ ($\lambda_{l-2}$ divides $(\lambda_{l-3}-\lambda_{l-1})$) to the columns of the matrix $\mathbf{D}_{\lambda_{l-2} \times \lambda_{l-1}}^{(2)}$ to obtain the matrix of size $\lambda_{l-2} \times \lambda_{l-3}$. Let this concatenated matrix be $\mathbf{D}_{\lambda_{l-2} \times \lambda_{l-3}}^{(3)}.$
\item Repeat the above procedure until we get $K \times (K-D)$ matrix. The construction of $K \times (K-D)$ matrix by the above procedure is guaranteed by \eqref{chain}. 
\item The sequence of construction of the matrices can be summarized as given below:
\begin{align}
\label{seq1}
\nonumber
&\mathbf{C}_{\lambda_{l-1} \times \lambda_{l}}^{\mathsf{T}}\triangleq\mathbf{D}_{\lambda_l \times \lambda_{l-1}}^{(1)}\rightarrow \mathbf{D}_{\lambda_{l-2} \times \lambda_{l-1}}^{(2)}\rightarrow \cdots \rightarrow\\&\mathbf{D}_{\lambda_{l-t} \times \lambda_{l-t+1}}^{(t)}
\rightarrow \cdots \rightarrow \mathbf{D}_{\lambda_{2} \times \lambda_{3}}^{(l-2)}\rightarrow \mathbf{D}_{\lambda_{2} \times \lambda_{1}}^{(l-1)}\rightarrow
\mathbf{D}_{D \times \lambda_{1}}^{(l)}
\nonumber
\\&\rightarrow \mathbf{D}_{D \times (K-D)}^{(l+1)}\rightarrow \mathbf{D}_{K \times (K-D)}^{(l+2)}=\mathbf{L}_{K \times (K-D)} .
\end{align}
\end{itemize}

If $\lambda_l$ divides $\lambda_{l-1}$ and $l$ is an even number, then the construction starts with a fat (number of columns greater than number of rows) matrix and the construction proceeds by constructing alternate tall (number of rows greater than number of columns) and fat matrices.

{\bf Case II: $l$ is an odd integer}

For this case the structure of the encoding matrix $\mathbf{L}_{K \times (K-D)}$ in terms of rectangular circulant matrices is shown in Fig. \ref{fig2} and in Fig. \ref{fig3}. Both these figures give the same encoding matrix.

\begin{itemize}
\item Start with the rectangular circulant matrix $\mathbf{C}_{\lambda_{l-1} \times \lambda_{l}}\triangleq\mathbf{D}_{\lambda_{l-1} \times \lambda_{l}}^{(1)}$ (this is a rectangular circulant matrix because $\lambda_l$ divides $\lambda_{l-1}$).
\item Concatenate the matrix $\mathbf{C}_{(\lambda_{l-2}-\lambda_l) \times \lambda_{l-1}}^{\mathsf{T}}$ to the columns of the matrix $\mathbf{D}_{\lambda_{l-1} \times \lambda_{l}}^{(1)}$ to obtain the concatenated matrix of size $\lambda_{l-1} \times \lambda_{l-2}$. Let this concatenated matrix be $\mathbf{D}_{\lambda_{l-1} \times \lambda_{l-2}}^{(2)}.$
\item Concatenate the $\mathbf{C}_{(\lambda_{l-3}-\lambda_{l-1}) \times \lambda_{l-2}}$ to the rows of the matrix $\mathbf{D}_{\lambda_{l-1} \times \lambda_{l-2}}^{(2)}$ to obtain the matrix of size $\lambda_{l-3} \times \lambda_{l-2}$. Let this concatenated matrix be $\mathbf{D}_{\lambda_{l-3} \times \lambda_{l-2}}^{(3)}.$
\item Repeat the above procedure until we get $K \times (K-D)$ matrix. The construction of $K \times (K-D)$ matrix by the above procedure is guaranteed by \eqref{chain}. 
\item The sequence of construction of the matrices can be summarized as given below:
\begin{align}
\label{seq2}
\nonumber
&\mathbf{C}_{\lambda_{l-1} \times \lambda_{l}}\triangleq\mathbf{D}_{\lambda_{l-1} \times \lambda_{l}}^{(1)}\rightarrow \mathbf{D}_{\lambda_{l-1} \times \lambda_{l-2}}^{(2)}\rightarrow \cdots \rightarrow \\&
\mathbf{D}_{\lambda_{l-t} \times \lambda_{l-t+1}}^{(t)}
\rightarrow \cdots \rightarrow \mathbf{D}_{\lambda_{2} \times \lambda_{3}}^{(l-2)}\rightarrow \mathbf{D}_{\lambda_{2} \times \lambda_{1}}^{(l-1)}\rightarrow
\mathbf{D}_{D \times \lambda_{1}}^{(l)}
\nonumber
\\&\rightarrow \mathbf{D}_{D \times (K-D)}^{(l+1)}\rightarrow \mathbf{D}_{K \times (K-D)}^{(l+2)}\rightarrow\mathbf{L}_{K \times (K-D)}.
\end{align}
\end{itemize}

If $\lambda_l$ divides $\lambda_{l-1}$ and $l$ is an odd number, then the construction starts with a tall matrix and proceeds by constructing alternate fat and tall matrices.
\begin{figure*}
\centering
\includegraphics[scale=0.60]{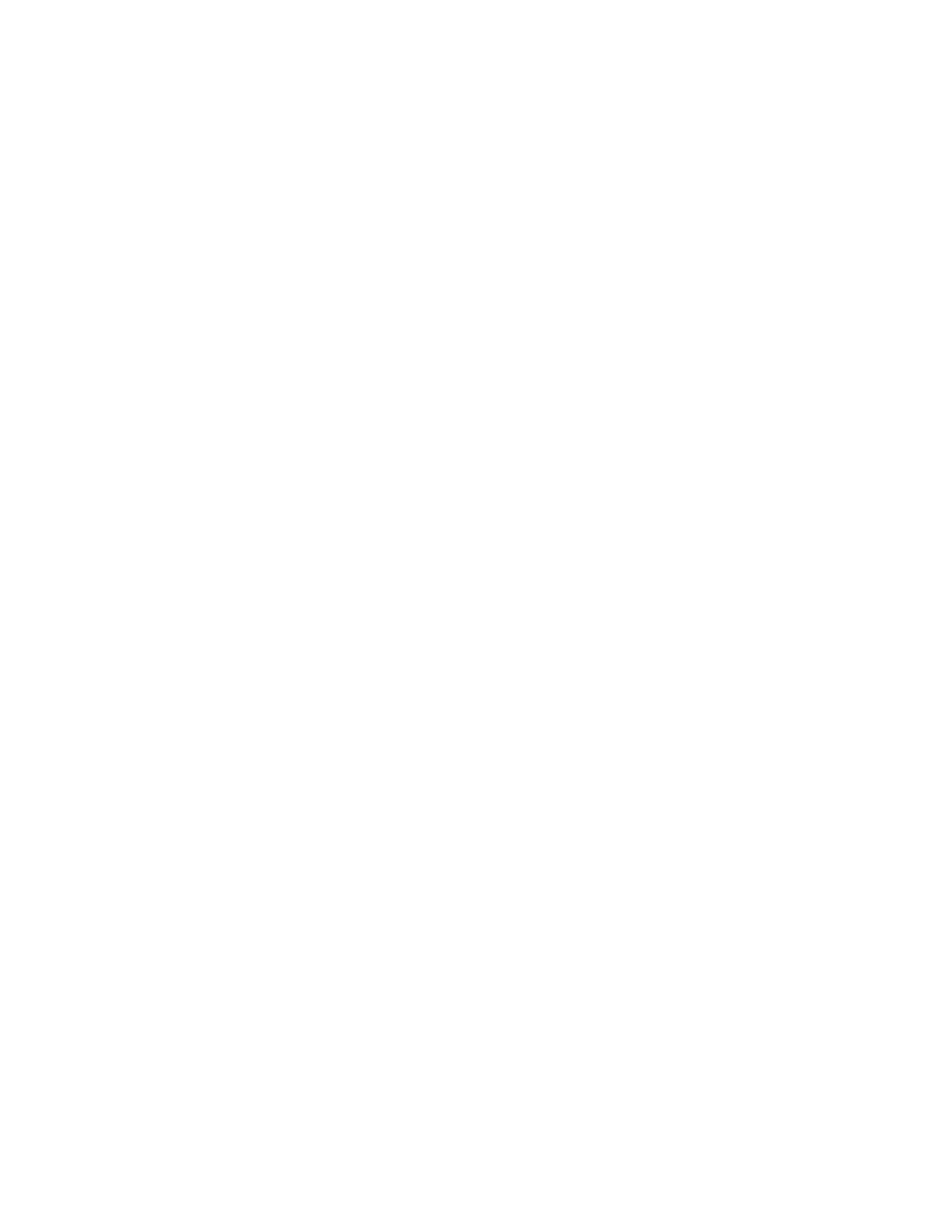}\\
\caption{Encoding matrix if $l$ is even.}
\label{fig1}
\end{figure*}

\begin{figure*}
\centering
\includegraphics[scale=0.60]{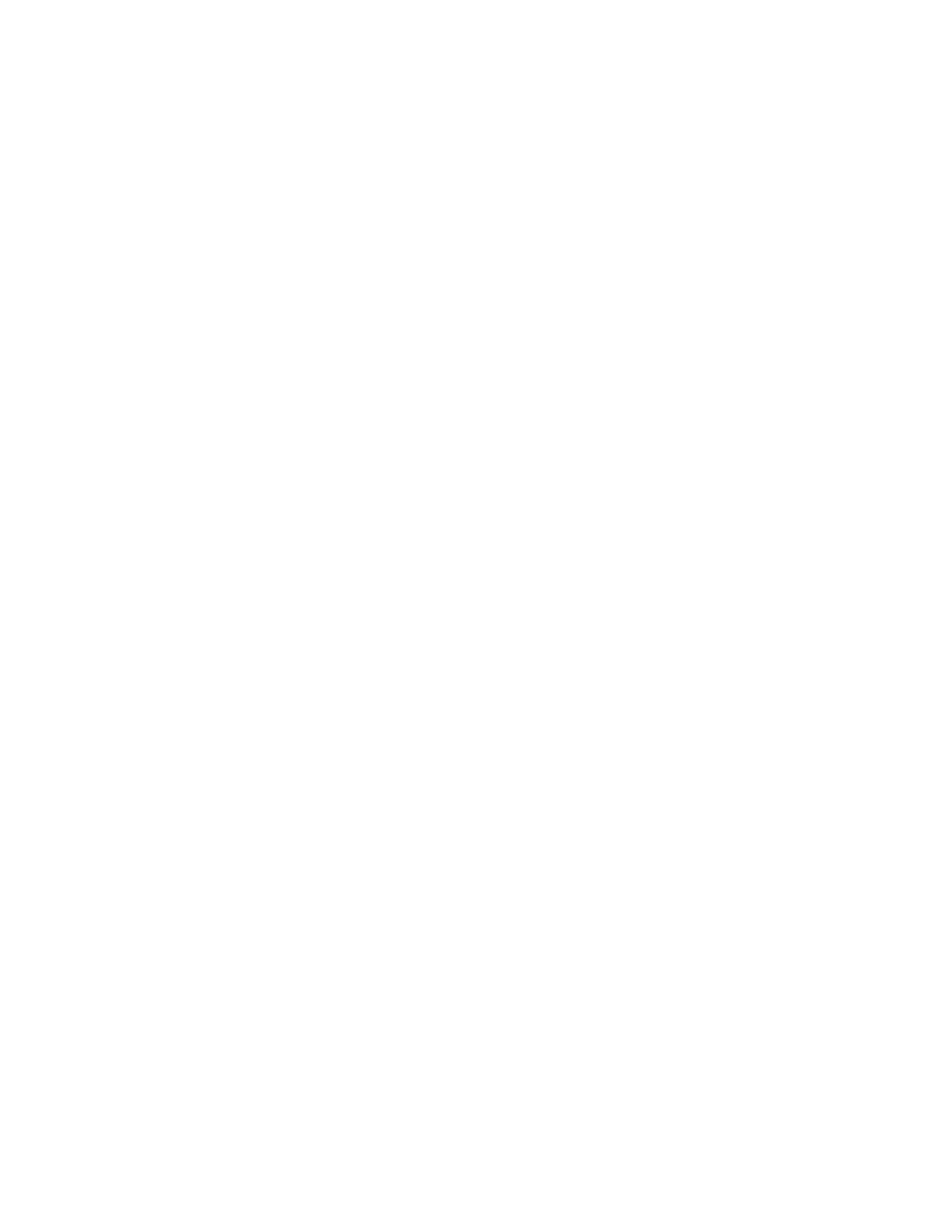}\\
\caption{Encoding matrix if $l$ is odd.}
\label{fig2}
\end{figure*}
\begin{figure*}
\centering
\includegraphics[scale=0.56]{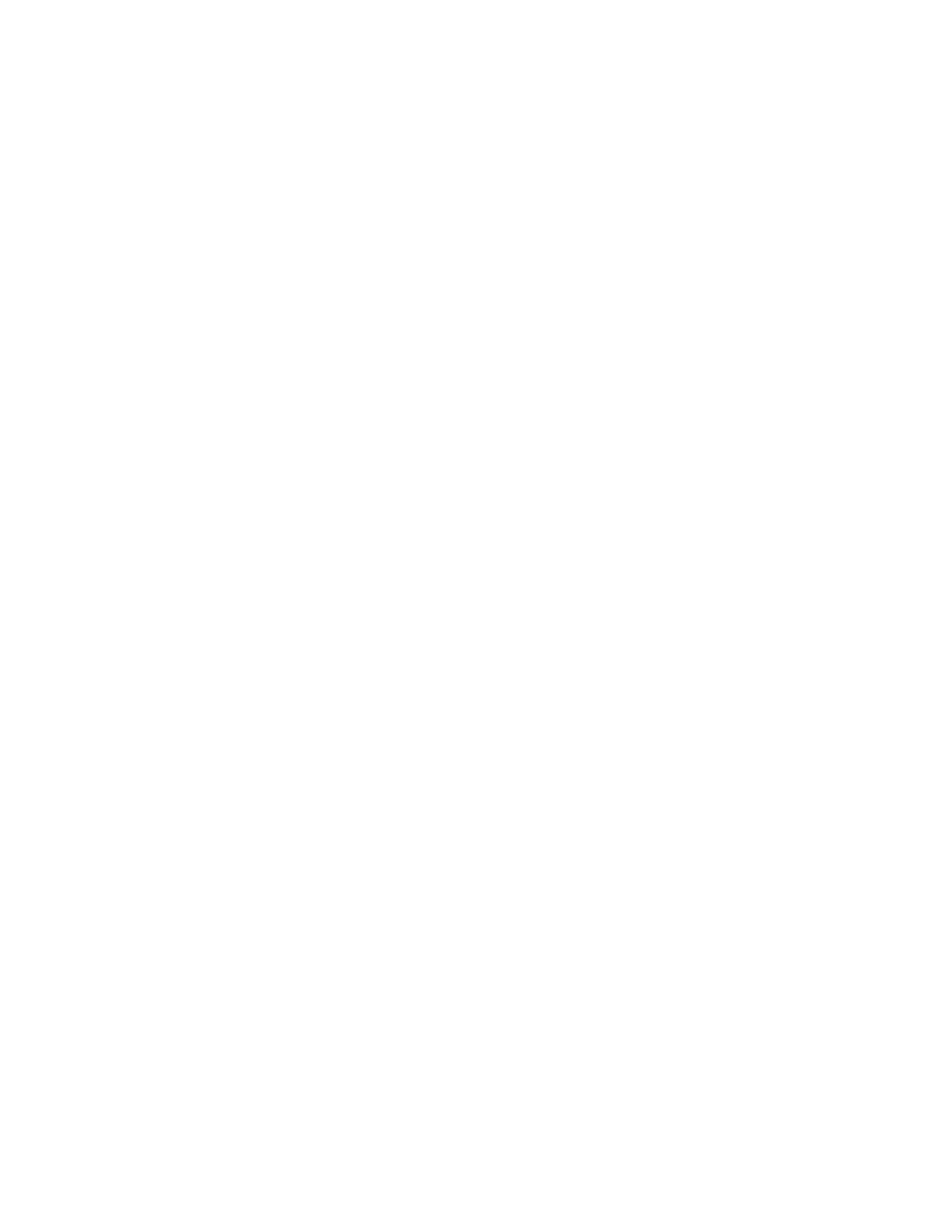}\\
~If $\lambda_l$ divides $\lambda_{l-1}$ and $l$ is an even integer, then $\mathbf{S}=\mathbf{C}_{\lambda_{l-1} \times \lambda_{l}}^{\mathsf{T}}$.\\
If $\lambda_l$ divides $\lambda_{l-1}$ and $l$ is an odd integer, then ~$\mathbf{S}=\mathbf{C}_{\lambda_{l-1} \times \lambda_{l}}$
\caption{Encoding matrix.}
\label{fig3}
\end{figure*}
\begin{lemma}
\label{lemma3}
Let $\mathbf{D}_{\lambda_{l-t} \times \lambda_{l-t+1}}^{(t)}$  be the binary matrix obtained during CONSTRUCTION for a given $K$,$D$ and for some integer $1 \leq t \leq l+2$. Note that the matrix $\mathbf{D}_{\lambda_{l-t} \times \lambda_{l-t-1}}^{(t+1)}$ which is $[\mathbf{I}_{\lambda_{l-t}}:\mathbf{I}_{\lambda_{l-t}}:\cdots:\mathbf{I}_{\lambda_{l-t}}:\mathbf{D}_{\lambda_{l-t} \times \lambda_{l-t+1}}^{(t)}]$  is a column concatenation of $t$ identity matrices of size $\lambda_{l-t} \times \lambda_{l-t}$ with the matrix $\mathbf{D}_{\lambda_{l-t} \times \lambda_{l-t+1}}^{(t)}$. In the matrix $\mathbf{D}_{\lambda_{l-t} \times \lambda_{l-t-1}}^{(t+1)}$, ~every $\lambda_{l-t}$ adjacent columns are linearly independent over every field $\mathbb{F}_q$.
\end{lemma}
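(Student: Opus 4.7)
The plan is to write $a = \lambda_{l-t}$ and $b = \lambda_{l-t+1}$ and exploit the fact that, by the Euclidean-type recurrence \eqref{chain}, the sequence $(\lambda_i)$ is strictly decreasing, so that $b < a$. This makes $\mathbf{D}^{(t)}$ tall and $\mathbf{D}^{(t+1)} = [\mathbf{I}_a \mid \cdots \mid \mathbf{I}_a \mid \mathbf{D}^{(t)}]$ fat. I would pick any window of $a$ consecutive columns of $\mathbf{D}^{(t+1)}$ and argue that the resulting $a \times a$ submatrix is invertible, splitting on where the window lies.

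First, if the window sits entirely inside the identity portion, then because each $\mathbf{I}_a$ block has exactly $a$ columns, the window consists of the last $a-s$ columns of one $\mathbf{I}_a$ followed by the first $s$ columns of the next $\mathbf{I}_a$, for some $0 \le s \le a-1$. This is a cyclic permutation of $\{e_1,\ldots,e_a\}$, hence a permutation matrix, invertible over every $\mathbb{F}_q$.

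Second, if the window straddles the rightmost $\mathbf{I}_a$ and $\mathbf{D}^{(t)}$, then since $b<a$ it necessarily contains the last $a-j$ columns of the preceding $\mathbf{I}_a$ together with the first $j$ columns of $\mathbf{D}^{(t)}$ for some $1 \le j \le b$. The window therefore has the form $[\,e_{j+1}\;\cdots\;e_a \mid d_1\;\cdots\;d_j\,]$, where $d_k$ denotes the $k$-th column of $\mathbf{D}^{(t)}$. I would use the pivots $e_{j+1},\ldots,e_a$ to clear rows $j+1,\ldots,a$ of each $d_k$ by column operations; the reduced $a \times a$ matrix becomes block triangular, with determinant equal (up to sign) to that of the top-left $j \times j$ submatrix of $\mathbf{D}^{(t)}$. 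Hence the lemma reduces to showing that every top-left $j \times j$ submatrix of $\mathbf{D}^{(t)}$ is nonsingular over every field, for $1 \le j \le b$.

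The hard part, I expect, will be this auxiliary claim about leading submatrices of $\mathbf{D}^{(t)}$. I would address it by a joint induction on $t$ carrying a strengthened hypothesis: at every stage, $\mathbf{D}^{(t)}$ has all of its top-left square submatrices invertible, together with a symmetric statement for the companion row-concatenation steps of the CONSTRUCTION. The base case $t=1$ is immediate, since $\mathbf{D}^{(1)}$ is a rectangular circulant matrix (or its transpose) composed of $\mathbf{I}_{\lambda_l}$ blocks, whose leading square submatrices are themselves permutation matrices; the inductive step applies the same block reduction used above, together with its row analogue, to propagate invertibility through each identity-block concatenation.
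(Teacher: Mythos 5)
Your overall strategy---induction on $t$, a case split on where the window of $\lambda_{l-t}$ adjacent columns sits, and a block-triangular determinant argument---is the same as the paper's. Your first case (window inside the identity portion is a permutation of $\{e_1,\ldots,e_a\}$) and your reduction of the straddling case to the leading $j\times j$ submatrix of $\mathbf{D}^{(t)}$ are both sound. One remark, though: the ``hard part'' you anticipate is not hard at all. The CONSTRUCTION always places an identity block at the top of $\mathbf{D}^{(t)}$ (its first $\lambda_{l-t+1}$ rows form $\mathbf{I}_{\lambda_{l-t+1}}$), so for $j\le\lambda_{l-t+1}$ the leading $j\times j$ submatrix is literally $\mathbf{I}_j$; no auxiliary strengthened induction about leading minors is needed. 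This is exactly how the paper handles the straddling case: the selected columns form the matrix in \eqref{matrix3}, which is block triangular with \emph{both} diagonal blocks equal to identity matrices.

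The genuine gap is that you only treat windows that do not wrap around. In this paper ``adjacent'' is cyclic (all subscripts are taken modulo the relevant dimension), and this is not a technicality: by Lemma~\ref{lemma1} and Lemma~\ref{lemma2}, receiver $R_k$ must invert the rows indexed by $\{k\}\cup\mathcal{I}_k=\{k+D+1,\ldots,K,1,\ldots,k\}$, a cyclically contiguous block of $K-D$ rows, so the independence statement propagated through the construction must be the cyclic one. You therefore also need the case of a window consisting of the last $m$ columns of $\mathbf{D}^{(t)}$ together with the first $\lambda_{l-t}-m$ columns of the \emph{leftmost} identity block. This is the paper's third case, the block anti-triangular matrix in \eqref{matrix4}, whose nontrivial block $\mathbf{F}$ comes from the last $m$ rows and last $m$ columns of $\mathbf{D}^{(t)}$---and it is precisely here, not in the straddling case, that the induction hypothesis on $\mathbf{D}^{(t)}$ is actually invoked (the top-left block of $\mathbf{D}^{(t)}$ being an identity says nothing about its bottom-right corner). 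As written, your induction never uses its hypothesis and does not deliver the cyclic independence that Theorem~\ref{thm1} requires.
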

\begin{proof}
The proof is by induction on $t$. For $t=1$, the columns of the matrix $\mathbf{D}_{\lambda_{l-1} \times \lambda_{l}}^{(1)}$ are independent which follows from the CONSTRUCTION. 

Let the Lemma be true for $t$. By induction hypothesis every set of $\lambda_{l-t+1}$ rows are independent in $\mathbf{D}_{\lambda_{l-t} \times \lambda_{l-t+1}}^{(t)}$ and we prove that every set of $\lambda_{l-t}$ columns are independent in $\mathbf{D}_{\lambda_{l-t} \times \lambda_{l-t-1}}^{(t+1)}$ as follows. 

The matrix $\mathbf{D}_{\lambda_{l-t} \times \lambda_{l-t-1}}^{(t+1)}$ comprises of $\lambda_{l-t-1}$ columns. If we select $\lambda_{l-t}$ adjacent columns from the first $\lambda_{l-t-1}-\lambda_{l-t}$ columns of the matrix $\mathbf{D}_{\lambda_{l-t} \times \lambda_{l-t-1}}^{(t+1)}$, the selected columns are columns of the $\mathbf{I}_{\lambda_{l-t}}$ (in a different order) and hence they are linearly independent. 

If we select $\lambda_{l-t}$ adjacent columns in such a way that $\lambda_{l-t}-m$ columns from $\mathbf{I}_{\lambda_{l-t}}$ and $m$ columns from $\mathbf{D}_{\lambda_{l-t} \times \lambda_{l-t+1}}^{(t)}$ ($m \leq \lambda_{l-t+1}$), then the selected $\lambda_{l-t}$ columns can be written as $\lambda_{l-t} \times \lambda_{l-t}$ matrix as given below
\begin{align} 
\label{matrix3}
\begin{bmatrix}
\textbf{0}&\mathbf{B}\\
\mathbf{A}&\mathbf{C}\\
\end{bmatrix},
\end{align}
where $[\mathbf{0}~\mathbf{A}]^\mathsf{T}$ is a $\lambda_{l-t} \times (\lambda_{l-t}-m)$ matrix corresponding to $\lambda_{l-t}-m$ columns of  $\mathbf{I}_{\lambda_{l-t}}$ and $[\mathbf{B}~\mathbf{C}]^\mathsf{T}$ is a $\lambda_{l-t} \times m$ matrix corresponding to $m$ columns of $\mathbf{D}_{\lambda_{l-t} \times \lambda_{l-t+1}}^{(t)}$. The matrix $\mathbf{A}$ is an identity matrix of size $(\lambda_{l-t}-m) \times (\lambda_{l-t}-m)$. 
The matrix $\mathbf{B}$ of dimension $m \times m$ is an identity matrix because the CONSTRUCTION gaurantees that the first $\lambda_{l-t+1}$ rows in $\mathbf{D}_{\lambda_{l-t} \times \lambda_{l-t+1}}^{(t)}$  are $\mathbf{I}_{\lambda_{l-t+1}}$. The matrix given in \eqref{matrix3} is a full rank matrix follows from the fact that it is a triangular matrix with $\mathbf{A}$ and $\mathbf{B}$ being identity matrices over every field $\mathbb{F}_q$.

If we select $\lambda_{l-t}$ adjacent columns in such a way that $m$ columns from $\mathbf{D}_{\lambda_{l-t} \times \lambda_{l-t+1}}^{(t)}$ ($m \leq \lambda_{l-t}$) and $\lambda_{l-t}-m$ columns from $\mathbf{I}_{\lambda_{l-t}}$, then the selected $\lambda_{l-t}$ columns can be written as $\lambda_{l-t} \times \lambda_{l-t}$ matrix as given below
\begin{align} 
\label{matrix4}
\begin{bmatrix}
\textbf{E}&\mathbf{G}\\
\mathbf{F}&\mathbf{0}\\
\end{bmatrix},
\end{align}
where $[\mathbf{G}~\mathbf{0}]^\mathsf{T}$ is a $\lambda_{l-t} \times (\lambda_{l-t}-m)$ matrix corresponding to $\lambda_{l-t}-m$ columns of  $\mathbf{I}_{\lambda_{l-t}}$ and $[\mathbf{E}~\mathbf{F}]^\mathsf{T}$ is a $\lambda_{l-t} \times m$ matrix corresponding to $m$ columns of $\mathbf{D}_{\lambda_{l-t} \times \lambda_{l-t+1}}^{(t)}$. The matrix $\mathbf{G}$ is an identity matrix of size $(\lambda_{l-t}-m) \times (\lambda_{l-t}-m)$. The matrix $\mathbf{F}$ is a full rank matrix follows from induction hypothesis.
The matrix given in \eqref{matrix4} is a full rank matrix follows from the fact that it is a triangular matrix with $\mathbf{G}$ and $\mathbf{F}$ having full rank over every field $\mathbb{F}_q$.
\end{proof}
\subsection{Examples}
\begin{example}
Consider a one-sided neighboring antidote MUICP with $K=21, D=4$. For this index coding problem, we have $\lambda_1=1$ and $l=1.$ 

An encoding matrix $\mathbf{L}_{21 \times 17}$ for this index coding problem is obtained in the following steps $\mathbf{C}_{4 \times 1}\triangleq\mathbf{D}_{4 \times 1}^{(1)} 
 \rightarrow \mathbf{D}_{4 \times 17}^{(2)} \rightarrow \mathbf{D}_{21 \times 17}=\mathbf{L}_{21 \times 17}$ by concatenating suitable rectangular circulant matrices.
The encoding matrix $\mathbf{L}_{21 \times 17}$ is given below.

\begin{small}
\arraycolsep=3pt
\setlength\extrarowheight{-3.0pt}
{
$$L_{21 \times 17}=\left[\begin{array}{*{20}c}
   1 & 0 & 0 & 0 & 0 & 0 & 0 & 0 & 0 & 0 & 0 & 0 & 0 & 0 & 0 & 0 & 0 \\
   0 & 1 & 0 & 0 & 0 & 0 & 0 & 0 & 0 & 0 & 0 & 0 & 0 & 0 & 0 & 0 & 0 \\
   0 & 0 & 1 & 0 & 0 & 0 & 0 & 0 & 0 & 0 & 0 & 0 & 0 & 0 & 0 & 0 & 0 \\
   0 & 0 & 0 & 1 & 0 & 0 & 0 & 0 & 0 & 0 & 0 & 0 & 0 & 0 & 0 & 0 & 0 \\
   0 & 0 & 0 & 0 & 1 & 0 & 0 & 0 & 0 & 0 & 0 & 0 & 0 & 0 & 0 & 0 & 0 \\
   0 & 0 & 0 & 0 & 0 & 1 & 0 & 0 & 0 & 0 & 0 & 0 & 0 & 0 & 0 & 0 & 0 \\
   0 & 0 & 0 & 0 & 0 & 0 & 1 & 0 & 0 & 0 & 0 & 0 & 0 & 0 & 0 & 0 & 0 \\
   0 & 0 & 0 & 0 & 0 & 0 & 0 & 1 & 0 & 0 & 0 & 0 & 0 & 0 & 0 & 0 & 0 \\
   0 & 0 & 0 & 0 & 0 & 0 & 0 & 0 & 1 & 0 & 0 & 0 & 0 & 0 & 0 & 0 & 0 \\
   0 & 0 & 0 & 0 & 0 & 0 & 0 & 0 & 0 & 1 & 0 & 0 & 0 & 0 & 0 & 0 & 0 \\
   0 & 0 & 0 & 0 & 0 & 0 & 0 & 0 & 0 & 0 & 1 & 0 & 0 & 0 & 0 & 0 & 0 \\
   0 & 0 & 0 & 0 & 0 & 0 & 0 & 0 & 0 & 0 & 0 & 1 & 0 & 0 & 0 & 0 & 0 \\
   0 & 0 & 0 & 0 & 0 & 0 & 0 & 0 & 0 & 0 & 0 & 0 & 1 & 0 & 0 & 0 & 0 \\
   0 & 0 & 0 & 0 & 0 & 0 & 0 & 0 & 0 & 0 & 0 & 0 & 0 & 1 & 0 & 0 & 0 \\
   0 & 0 & 0 & 0 & 0 & 0 & 0 & 0 & 0 & 0 & 0 & 0 & 0 & 0 & 1 & 0 & 0 \\
   0 & 0 & 0 & 0 & 0 & 0 & 0 & 0 & 0 & 0 & 0 & 0 & 0 & 0 & 0 & 1 & 0 \\
   0 & 0 & 0 & 0 & 0 & 0 & 0 & 0 & 0 & 0 & 0 & 0 & 0 & 0 & 0 & 0 & 1 \\
   \hline
\color{blue}{\textbf{1}} & \color{blue}{\textbf{0}} & \color{blue}{\textbf{0}} & \color{blue}{\textbf{0}} & \color{blue}{\textbf{1}} & \color{blue}{\textbf{0}} & \color{blue}{\textbf{0}} & \color{blue}{\textbf{0}} & \color{blue}{\textbf{1}} & \color{blue}{\textbf{0}} & \color{blue}{\textbf{0}} & \color{blue}{\textbf{0}} & \color{blue}{\textbf{1}} & \color{blue}{\textbf{0}} & \color{blue}{\textbf{0}} & \color{blue}{\textbf{0}} & |\color{red}{\textbf{1}} \\
  \color{blue}{\textbf{0}} & \color{blue}{\textbf{1}} & \color{blue}{\textbf{0}} & \color{blue}{\textbf{0}} & \color{blue}{\textbf{0}} & \color{blue}{\textbf{1}} & \color{blue}{\textbf{0}} & \color{blue}{\textbf{0}} & \color{blue}{\textbf{0}} & \color{blue}{\textbf{1}} & \color{blue}{\textbf{0}} & \color{blue}{\textbf{0}} & \color{blue}{\textbf{0}} & \color{blue}{\textbf{1}} & \color{blue}{\textbf{0}} & \color{blue}{\textbf{0}} & |\color{red}{\textbf{1}} \\
  \color{blue}{\textbf{0}} & \color{blue}{\textbf{0}} & \color{blue}{\textbf{1}} & \color{blue}{\textbf{0}} & \color{blue}{\textbf{0}} & \color{blue}{\textbf{0}} & \color{blue}{\textbf{1}} & \color{blue}{\textbf{0}} & \color{blue}{\textbf{0}} & \color{blue}{\textbf{0}} & \color{blue}{\textbf{1}} & \color{blue}{\textbf{0}} & \color{blue}{\textbf{0}} & \color{blue}{\textbf{0}} & \color{blue}{\textbf{1}} & \color{blue}{\textbf{0}} & |\color{red}{\textbf{1}} \\
  \color{blue}{\textbf{0}} & \color{blue}{\textbf{0}} & \color{blue}{\textbf{0}} & \color{blue}{\textbf{1}} & \color{blue}{\textbf{0}} & \color{blue}{\textbf{0}} & \color{blue}{\textbf{0}} & \color{blue}{\textbf{1}} & \color{blue}{\textbf{0}} & \color{blue}{\textbf{0}} & \color{blue}{\textbf{0}} & \color{blue}{\textbf{1}} & \color{blue}{\textbf{0}} & \color{blue}{\textbf{0}} & \color{blue}{\textbf{0}} & \color{blue}{\textbf{1}}& |\color{red}{\textbf{1}} \\
   \end{array}\right]$$
}
\end{small}
The scalar index code is
\begin{align*}
\mathfrak{C}=\{&x_1+x_{18},~~~~ x_2+x_{19}, ~~~~ x_3+x_{20},\\& x_4+x_{21},
~~~~ x_5+x_{18}, ~~~~ x_6+x_{19}, \\&x_7+x_{20}, ~~~~ x_8+x_{21}, ~~~~ x_9+x_{18}, \\& x_{10}+x_{19}, ~~~ x_{11}+x_{20}, ~~~ x_{12}+x_{21},\\&
x_{13}+x_{18}, ~~~ x_{14}+x_{19}, ~~~ x_{15}+x_{20}, \\& x_{16}+x_{21},
~~~ x_{17}+x_{18}+x_{19}+x_{20}+x_{21}\}.
\end{align*}

The receivers decode their wanted message by using the code symbols given in Table \ref{table1} and their antidotes.
\begin{table}[h]
\centering
\setlength\extrarowheight{3.5pt}
\begin{small}
\begin{tabular}{|c|c|c|c|}
\hline
\textbf{Rx} & {$\mathcal{W}_k$} & \textbf{Code symbols used to decode} $\mathcal{W}_k$\\
\hline
\textbf{$R_1$} & $x_1$ & $x_1+x_{18},~x_5+x_{18}$ \\
\hline
\textbf{$R_2$} & $x_2$& $x_2+x_{19},~x_6+x_{19}$ \\
\hline 
\textbf{$R_3$} & $x_3$ & $x_3+x_{20},~x_7+x_{20}$  \\
\hline
\textbf{$R_4$} & $x_4$ & $x_4+x_{21},~x_8+x_{21}$  \\
\hline
\textbf{$R_5$} & $x_5$ & $x_5+x_{18},~x_9+x_{18}$  \\
\hline
\textbf{$R_6$} & $x_6$ & $x_6+x_{19},~x_{10}+x_{19}$ \\
\hline
\textbf{$R_7$} & $x_7$ & $x_7+x_{20},~x_{11}+x_{20}$ \\
\hline 
\textbf{$R_8$} & $x_8$ & $x_8+x_{21},~x_{12}+x_{21}$ \\
\hline
\textbf{$R_9$} & $x_9$ & $x_9+x_{18},~x_{13}+x_{18}$ \\
\hline
\textbf{$R_{10}$} &$x_{10}$ & $x_{10}+x_{19},~x_{14}+x_{19}$ \\
\hline
\textbf{$R_{11}$} & $x_{11}$ & $x_{11}+x_{20},~x_{15}+x_{20}$ \\
\hline 
\textbf{$R_{12}$} & $x_{12}$ & $x_{12}+x_{21},~x_{16}+x_{21}$ \\
\hline
\textbf{$R_{13}$} & $x_{13}$ &  $x_{13}+x_{18},~x_{14}+x_{19},~x_{15}+x_{20}$,\\
~&~&$x_{16}+x_{21},~x_{17}+x_{18}+x_{19}+x_{20}+x_{21}$ \\
\hline
\textbf{$R_{14}$} & $x_{14}$ & $x_{14}+x_{19},~x_{15}+x_{20},~x_{16}+x_{21}$, \\
~&~&$x_{17}+x_{18}+x_{19}+x_{20}+x_{21}$ \\
\hline
\textbf{$R_{15}$} & $x_{15}$ & $x_{15}+x_{20},~x_{16}+x_{21},$\\
~&~&$x_{17}+x_{18}+x_{19}+x_{20}+x_{21}$ \\
\hline 
\textbf{$R_{16}$} & $x_{16}$ & $x_{16}+x_{21},x_{17}+x_{18}+x_{19}+x_{20}+x_{21}$ \\
\hline
\textbf{$R_{17}$} & $x_{17}$ & $x_{17}+x_{18}+x_{19}+x_{20}+x_{21}$ \\
\hline
\textbf{$R_{18}$} & $x_{18}$ &  $x_1+x_{18}$\\
\hline
\textbf{$R_{19}$} & $x_{19}$ & $x_2+x_{19}$ \\
\hline
\textbf{$R_{20}$} & $x_{20}$ &  $x_3+x_{20}$ \\
\hline
\textbf{$R_{21}$} & $x_{21}$ & $x_4+x_{21}$ \\
\hline 
\end{tabular}
\end{small}
\vspace{5pt}
\caption{}
\label{table1}
\end{table}
\end{example}
\begin{example}
Consider a one-sided neighboring antidote MUICP with $K=21, D=17$. 

For this index coding problem, we have $\lambda_1=4,\lambda_2=1$ and $l=2$.

An encoding matrix $\mathbf{L}_{21 \times 4}$ for this index coding problem is obtained in the following steps $\mathbf{C}_{4 \times 1}^{\mathsf{T}}\triangleq\mathbf{D}_{1 \times 4}^{(1)}  
 \rightarrow\mathbf{D}_{17 \times 4}^{(2)}  \rightarrow \mathbf{L}_{21 \times 4}$ by concatenating suitable rectangular circulant matrices.
The encoding matrix $\mathbf{L}_{21 \times 4}$ is given below.
\begin{small}
\arraycolsep=3pt
\setlength\extrarowheight{-2.0pt}
{
$$L_{21 \times 4}=\left[\begin{array}{*{20}c}
   1 & 0 & 0 & 0 \\
   0 & 1 & 0 & 0 \\
   0 & 0 & 1 & 0 \\
   0 & 0 & 0 & 1 \\
 \hline
 \color{blue}{\textbf{1}} & \color{blue}{\textbf{0}} & \color{blue}{\textbf{0}} & \color{blue}{\textbf{0}} \\
  \color{blue}{\textbf{0}} & \color{blue}{\textbf{1}} & \color{blue}{\textbf{0}} & \color{blue}{\textbf{0}} \\
  \color{blue}{\textbf{0}} & \color{blue}{\textbf{0}} & \color{blue}{\textbf{1}} & \color{blue}{\textbf{0}} \\
  \color{blue}{\textbf{0}} & \color{blue}{\textbf{0}} & \color{blue}{\textbf{0}} & \color{blue}{\textbf{1}}\\
    \color{blue}{\textbf{1}} & \color{blue}{\textbf{0}} & \color{blue}{\textbf{0}} & \color{blue}{\textbf{0}} \\
  \color{blue}{\textbf{0}} & \color{blue}{\textbf{1}} & \color{blue}{\textbf{0}} & \color{blue}{\textbf{0}} \\
  \color{blue}{\textbf{0}} & \color{blue}{\textbf{0}} & \color{blue}{\textbf{1}} & \color{blue}{\textbf{0}} \\
  \color{blue}{\textbf{0}} & \color{blue}{\textbf{0}} & \color{blue}{\textbf{0}} & \color{blue}{\textbf{1}}\\
  \color{blue}{\textbf{1}} & \color{blue}{\textbf{0}} & \color{blue}{\textbf{0}} & \color{blue}{\textbf{0}} \\
  \color{blue}{\textbf{0}} & \color{blue}{\textbf{1}} & \color{blue}{\textbf{0}} & \color{blue}{\textbf{0}} \\
  \color{blue}{\textbf{0}} & \color{blue}{\textbf{0}} & \color{blue}{\textbf{1}} & \color{blue}{\textbf{0}} \\
  \color{blue}{\textbf{0}} & \color{blue}{\textbf{0}} & \color{blue}{\textbf{0}} & \color{blue}{\textbf{1}}\\
  \color{blue}{\textbf{1}} & \color{blue}{\textbf{0}} & \color{blue}{\textbf{0}} & \color{blue}{\textbf{0}} \\
  \color{blue}{\textbf{0}} & \color{blue}{\textbf{1}} & \color{blue}{\textbf{0}} & \color{blue}{\textbf{0}} \\
  \color{blue}{\textbf{0}} & \color{blue}{\textbf{0}} & \color{blue}{\textbf{1}} & \color{blue}{\textbf{0}} \\
  \color{blue}{\textbf{0}} & \color{blue}{\textbf{0}} & \color{blue}{\textbf{0}} & \color{blue}{\textbf{1}}\\
  \color{red}{\textbf{1}} & \color{red}{\textbf{1}} & \color{red}{\textbf{1}} & \color{red}{\textbf{1}} \\
   \end{array}\right]$$
}
\end{small}
The scalar index code is
\begin{align*}
\mathfrak{C}=\{&x_1+x_{5}+x_9+x_{13}+x_{17}+x_{21},\\& x_{2}+x_6+x_{10}+x_{14}+x_{18}+x_{21},\\& 
x_{3}+x_7+x_{11}+x_{15}+x_{19}+x_{21}, \\&
x_4+x_{8}+x_{12}+x_{16}+x_{20}+x_{21}\}.
\end{align*}

The receivers decode their wanted message by using the code symbols given in Table \ref{table2} and their antidotes.
\begin{table}[h]
\centering
\setlength\extrarowheight{2.2pt}
\begin{small}
\begin{tabular}{|c|c|c|c|c|}
\hline
\textbf{Rx} & {$\mathcal{W}_k$} & \textbf{Code symbols used to decode} $\mathcal{W}_k$\\
\hline
\textbf{$R_1$} & $x_1$ & $x_1+x_{5}+x_9+x_{13}+x_{17}+x_{21},$ \\
~ & ~ & $x_{2}+x_6+x_{10}+x_{14}+x_{18}+x_{21}$ \\
\hline
\textbf{$R_2$} & $x_2$& $x_{2}+x_6+x_{10}+x_{14}+x_{18}+x_{21}$,\\ 
~ &~&$x_{3}+x_7+x_{11}+x_{15}+x_{19}+x_{21}$ \\
\hline 
\textbf{$R_3$} & $x_3$ & $x_{3}+x_7+x_{11}+x_{15}+x_{19}+x_{21}$, \\
~ & ~& $x_4+x_{8}+x_{12}+x_{16}+x_{20}+x_{21}$\\
\hline
\textbf{$R_4$} & $x_4$ & $x_4+x_{8}+x_{12}+x_{16}+x_{20}+x_{21}$  \\
\hline
\textbf{$R_5$} & $x_5$ & $x_1+x_{5}+x_9+x_{13}+x_{17}+x_{21}$  \\
\hline
\textbf{$R_6$} & $x_6$ & $x_{2}+x_6+x_{10}+x_{14}+x_{18}+x_{21}$  \\
\hline
\textbf{$R_7$} & $x_7$ & $x_{3}+x_7+x_{11}+x_{15}+x_{19}+x_{21}$  \\
\hline 
\textbf{$R_8$} & $x_8$ & $x_4+x_{8}+x_{12}+x_{16}+x_{20}+x_{21}$  \\
\hline
\textbf{$R_9$} & $x_9$ & $x_1+x_{5}+x_9+x_{13}+x_{17}+x_{21}$  \\
\hline
\textbf{$R_{10}$} &$x_{10}$ & $x_{2}+x_6+x_{10}+x_{14}+x_{18}+x_{21}$  \\
\hline
\textbf{$R_{11}$} & $x_{11}$ & $x_{3}+x_7+x_{11}+x_{15}+x_{19}+x_{21}$  \\
\hline 
\textbf{$R_{12}$} & $x_{12}$ & $x_4+x_{8}+x_{12}+x_{16}+x_{20}+x_{21}$  \\
\hline
\textbf{$R_{13}$} & $x_{13}$ & $x_1+x_{5}+x_9+x_{13}+x_{17}+x_{21}$  \\
\hline
\textbf{$R_{14}$} & $x_{14}$ & $x_{2}+x_6+x_{10}+x_{14}+x_{18}+x_{21}$  \\
\hline
\textbf{$R_{15}$} & $x_{15}$& $x_{3}+x_7+x_{11}+x_{15}+x_{19}+x_{21}$  \\
\hline 
\textbf{$R_{16}$} & $x_{16}$ &$x_4+x_{8}+x_{12}+x_{16}+x_{20}+x_{21}$  \\
\hline
\textbf{$R_{17}$} & $x_{17}$ & $x_1+x_{5}+x_9+x_{13}+x_{17}+x_{21}$  \\
\hline
\textbf{$R_{18}$} & $x_{18}$ &  $x_{2}+x_6+x_{10}+x_{14}+x_{18}+x_{21}$ \\
\hline
\textbf{$R_{19}$} & $x_{19}$ & $x_{3}+x_7+x_{11}+x_{15}+x_{19}+x_{21}$ \\
\hline
\textbf{$R_{20}$} & $x_{20}$ &  $x_4+x_{8}+x_{12}+x_{16}+x_{20}+x_{21}$  \\
\hline
\textbf{$R_{21}$} & $x_{21}$ & $x_1+x_{5}+x_9+x_{13}+x_{17}+x_{21}$  \\
\hline 
\end{tabular}
\end{small}
\vspace{5pt}
\caption{}
\label{table2}
\end{table}
\end{example}

\begin{example}
\label{ex3}
Consider a one-sided neighboring antidote MUICP with $K=44, D=17$. 

For this index coding problem, we have $\lambda_1=10,\lambda_2=7,\lambda_3=3,\lambda_4=1$ and $l=4$.

An encoding matrix $\mathbf{L}_{44 \times 27}$ for this index coding problem is obtained in the following steps $\mathbf{C}_{3 \times 1}^{\mathsf{T}}\triangleq\mathbf{D}_{1 \times 3}^{(1)} 
 \rightarrow \mathbf{D}_{7 \times 3}^{(2)} \rightarrow \mathbf{D}_{7 \times 10}^{(3)} \rightarrow \mathbf{D}_{17 \times 10}^{(4)} \rightarrow \mathbf{D}_{17 \times 27}^{(5)} \rightarrow \mathbf{D}_{44 \times 27}=\mathbf{L}_{44 \times 27}$ by concatenating suitable rectangular circulant matrices.
The encoding matrix $\mathbf{L}_{44 \times 27}$ is given in Fig. \ref{ex3matrix}.

\begin{figure*}
\arraycolsep=1.3pt
\setlength\extrarowheight{-1.0pt}
\begin{small}
{
$$L_{44 \times 27}=\left[
\begin{array}{ccccccccccccccccccccccccccc}
1 & 0 & 0 & 0 & 0 & 0 & 0 & 0 & 0 & 0 & 0 & 0& 0 & 0 & 0 & 0 & 0 & 0 & 0 & 0 & 0 & 0 & 0 & 0 & 0 & 0 & 0\\
0 & 1 & 0 & 0 & 0 & 0 & 0 & 0 & 0 & 0 & 0 & 0& 0 & 0 & 0 & 0 & 0 & 0 & 0 & 0 & 0 & 0 & 0 & 0 & 0 & 0 & 0\\
0 & 0 & 1 & 0 & 0 & 0 & 0 & 0 & 0 & 0 & 0 & 0& 0 & 0 & 0 & 0 & 0 & 0 & 0 & 0 & 0 & 0 & 0 & 0 & 0 & 0 & 0\\
0 & 0 & 0 & 1 & 0 & 0 & 0 & 0 & 0 & 0 & 0 & 0& 0 & 0 & 0 & 0 & 0 & 0 & 0 & 0 & 0 & 0 & 0 & 0 & 0 & 0 & 0\\
0 & 0 & 0 & 0 & 1 & 0 & 0 & 0 & 0 & 0 & 0 & 0& 0 & 0 & 0 & 0 & 0 & 0 & 0 & 0 & 0 & 0 & 0 & 0 & 0 & 0 & 0\\
0 & 0 & 0 & 0 & 0 & 1 & 0 & 0 & 0 & 0 & 0 & 0& 0 & 0 & 0 & 0 & 0 & 0 & 0 & 0 & 0 & 0 & 0 & 0 & 0 & 0 & 0\\
 0 & 0 & 0 & 0 & 0  & 0 & 1 & 0 & 0 & 0 & 0& 0 & 0 & 0 & 0 & 0 & 0 & 0 & 0 & 0 & 0 & 0 & 0 & 0 & 0 & 0 & 0\\
 0 & 0 & 0 & 0 & 0 & 0 & 0 & 1 & 0 & 0 & 0 & 0& 0 & 0 & 0 & 0 & 0 & 0 & 0 & 0 & 0 & 0 & 0 & 0 & 0 & 0 & 0\\
 0 & 0 & 0 & 0 & 0 & 0 & 0 & 0 & 1 & 0 & 0 & 0& 0 & 0 & 0 & 0 & 0 & 0 & 0 & 0 & 0 & 0 & 0 & 0 & 0 & 0 & 0\\
0 & 0 & 0 & 0 & 0 & 0 & 0 & 0 & 0 & 1 & 0 & 0& 0 & 0 & 0 & 0 & 0 & 0 & 0 & 0 & 0 & 0 & 0 & 0 & 0 & 0 & 0\\
 0 & 0 & 0 & 0 & 0 & 0 & 0 & 0 & 0 & 0 & 1 & 0& 0 & 0 & 0 & 0 & 0 & 0 & 0 & 0 & 0 & 0 & 0 & 0 & 0 & 0 & 0\\
 0 & 0 & 0 & 0 & 0 & 0 & 0 & 0 & 0 & 0 & 0 & 1& 0 & 0 & 0 & 0 & 0 & 0 & 0 & 0 & 0 & 0 & 0 & 0 & 0 & 0 & 0\\
 0 & 0 & 0 & 0 & 0 & 0 & 0 & 0 & 0 & 0 & 0 & 0& 1 & 0 & 0 & 0 & 0 & 0 & 0 & 0 & 0 & 0 & 0 & 0 & 0 & 0 & 0\\
0 & 0 & 0 & 0 & 0 & 0 & 0 & 0 & 0 & 0 & 0 & 0& 0 & 1 & 0 & 0 & 0 & 0 & 0 & 0 & 0 & 0 & 0 & 0 & 0 & 0 & 0\\
0 & 0 & 0 & 0 & 0 & 0 & 0 & 0 & 0 & 0 & 0 & 0& 0 & 0 & 1 & 0 & 0 & 0 & 0 & 0 & 0 & 0 & 0 & 0 & 0 & 0 & 0\\
 0 & 0 & 0 & 0 & 0 & 0 & 0 & 0 & 0 & 0 & 0 & 0& 0 & 0 & 0 & 1 & 0 & 0 & 0 & 0 & 0 & 0 & 0 & 0 & 0 & 0 & 0\\
0 & 0 & 0 & 0 & 0 & 0 & 0 & 0 & 0 & 0 & 0 & 0& 0 & 0 & 0 & 0 & 1 & 0 & 0 & 0 & 0 & 0 & 0 & 0 & 0 & 0 & 0 \\
 0 & 0 & 0 & 0 & 0 & 0 & 0 & 0 & 0 & 0 & 0 & 0& 0 & 0 & 0 & 0 & 0 & 1 & 0 & 0 & 0 & 0 & 0 & 0 & 0 & 0 & 0\\
 0 & 0 & 0 & 0 & 0 & 0 & 0 & 0 & 0 & 0 & 0 & 0& 0 & 0 & 0 & 0 & 0 & 0 &1 & 0 & 0 & 0 & 0 & 0 & 0 & 0 & 0\\
  0 & 0 & 0 & 0 & 0 & 0 & 0 & 0 & 0 & 0 & 0 & 0& 0 & 0 & 0 & 0 & 0 & 0 & 0 & 1 & 0 & 0 & 0 & 0 & 0 & 0 & 0\\
 0 & 0 & 0 & 0 & 0 & 0 & 0 & 0 & 0& 0  & 0 & 0 & 0& 0 & 0 & 0 & 0 & 0 & 0 & 0 & 1 & 0 & 0 & 0 & 0 & 0 & 0\\
 0 & 0 & 0 & 0 & 0 & 0 & 0 & 0 & 0 & 0 & 0 & 0 & 0& 0 & 0 & 0 & 0 & 0 & 0 & 0 & 0 & 1 & 0 & 0 & 0 & 0 & 0\\
  0 & 0 & 0 & 0 & 0 & 0 & 0 & 0 & 0 & 0 & 0 & 0& 0 & 0 & 0 & 0 & 0 & 0 & 0 & 0 & 0 & 0 & 1 & 0 & 0 & 0 & 0\\
 0 & 0 & 0 & 0 & 0 & 0 & 0 & 0 & 0 & 0 & 0 & 0& 0 & 0 & 0 & 0 & 0 & 0 & 0 & 0 & 0 & 0 & 0 & 1 & 0 & 0 & 0\\
 0 & 0 & 0 & 0 & 0 & 0 & 0 & 0 & 0 & 0 & 0 & 0& 0 & 0 & 0 & 0 & 0 & 0 & 0 & 0 & 0 & 0 & 0 & 0 & 1 & 0 & 0\\
  0 & 0 & 0 & 0 & 0 & 0 & 0 & 0 & 0 & 0 & 0 & 0& 0 & 0 & 0 & 0 & 0 & 0 & 0 & 0 & 0 & 0 & 0 & 0 & 0 & 1 & 0\\
  0 & 0 & 0 & 0 & 0 & 0 & 0 & 0 & 0 & 0 & 0 & 0& 0 & 0 & 0 & 0 & 0 & 0 & 0 & 0 & 0 & 0 & 0 & 0 & 0 & 0 & 1\\
\hline
1 & 0 & 0 & 0 & 0 & 0 & 0 & 0 & 0 & 0 & 0 & 0& 0 & 0 & 0 & 0 & 0 & |{\color{red}\textbf{1}} & {\color{red}\textbf{0}} & {\color{red}\textbf{0}} & {\color{red}\textbf{0}} & {\color{red}\textbf{0}} & {\color{red}\textbf{0}} & {\color{red}\textbf{0}} & {\color{red}\textbf{0}} & {\color{red}\textbf{0}} & {\color{red}\textbf{0}}\\
  0 & 1 & 0 & 0 & 0 & 0 & 0 & 0 & 0 & 0 & 0 & 0& 0 & 0 & 0 & 0 & 0 & |{\color{red}\textbf{0}} & {\color{red}\textbf{1}} & {\color{red}\textbf{0}} &{\color{red}\textbf{0}} & {\color{red}\textbf{0}} & {\color{red}\textbf{0}} & {\color{red}\textbf{0}} & {\color{red}\textbf{0}} & {\color{red}\textbf{0}} & {\color{red}\textbf{0}}\\
  0 & 0 & 1 & 0 & 0 & 0 & 0 & 0 & 0 & 0 & 0 & 0& 0 & 0 & 0 & 0 & 0 & |{\color{red}\textbf{0}} & {\color{red}\textbf{0}} &  {\color{red}\textbf{1}} & {\color{red}\textbf{0}}& {\color{red}\textbf{0}} & {\color{red}\textbf{0}} & {\color{red}\textbf{0}} & {\color{red}\textbf{0}} & {\color{red}\textbf{0}} & {\color{red}\textbf{0}}\\
  0 & 0 & 0 & 1 & 0 & 0 & 0 & 0 & 0 & 0 & 0 & 0& 0 & 0 & 0 & 0 & 0 & |{\color{red}\textbf{0}} & {\color{red}\textbf{0}} & {\color{red}\textbf{0}} &  {\color{red}\textbf{1}} & {\color{red}\textbf{0}} & {\color{red}\textbf{0}}& {\color{red}\textbf{0}} & {\color{red}\textbf{0}}& {\color{red}\textbf{0}} & {\color{red}\textbf{0}}\\
 0 & 0 & 0 & 0 & 1 & 0 & 0 & 0 & 0 & 0 & 0 & 0& 0 & 0 & 0 & 0 & 0 & |{\color{red}\textbf{0}} & {\color{red}\textbf{0}} &{\color{red}\textbf{ 0}} & {\color{red}\textbf{0}} & {\color{red}\textbf{1}}& {\color{red}\textbf{0}} & {\color{red}\textbf{0}} & {\color{red}\textbf{0}} & {\color{red}\textbf{0}} & {\color{red}\textbf{0}}\\
  0 & 0 & 0 & 0 & 0 & 1 & 0 & 0 & 0 & 0 & 0 & 0& 0 & 0 & 0 & 0 & 0 & |{\color{red}\textbf{0}} & {\color{red}\textbf{0}} & {\color{red}\textbf{0}} & {\color{red}\textbf{0}} & {\color{red}\textbf{0}} & {\color{red}\textbf{1}} & {\color{red}\textbf{0}} & {\color{red}\textbf{0}} & {\color{red}\textbf{0}} & {\color{red}\textbf{0}}\\
  0 & 0 & 0 & 0 & 0  & 0 & 1 & 0 & 0 & 0 & 0& 0 & 0 & 0 & 0 & 0 & 0 & |{\color{red}\textbf{0}} & {\color{red}\textbf{0}} & {\color{red}\textbf{0}} & {\color{red}\textbf{0}} & {\color{red}\textbf{0}} & {\color{red}\textbf{0}} & {\color{red}\textbf{1}} & {\color{red}\textbf{0}} & {\color{red}\textbf{0}} & {\color{red}\textbf{0}} \\
 0 & 0 & 0 & 0 & 0 & 0 & 0 & 1 & 0 & 0 & 0 & 0& 0 & 0 & 0 & 0 & 0 & |{\color{red}\textbf{0}} & {\color{red}\textbf{0}} & {\color{red}\textbf{0}} & {\color{red}\textbf{0}} & {\color{red}\textbf{0}} & {\color{red}\textbf{0}} & {\color{red}\textbf{0}} & {\color{red}\textbf{1}} & {\color{red}\textbf{0}} & {\color{red}\textbf{0}}\\
 0 & 0 & 0 & 0 & 0 & 0 & 0 & 0 & 1 & 0 & 0 & 0& 0 & 0 & 0 & 0 & 0 & |{\color{red}\textbf{0}} & {\color{red}\textbf{0}} & {\color{red}\textbf{0}} & {\color{red}\textbf{0}} & {\color{red}\textbf{0}} & {\color{red}\textbf{0}} & {\color{red}\textbf{0}} & {\color{red}\textbf{0}} & {\color{red}\textbf{1}} & {\color{red}\textbf{0}}\\
 0 & 0 & 0 & 0 & 0 & 0 & 0 & 0 & 0 & 1 & 0 & 0& 0 & 0 & 0 & 0 & 0 & |\underline{{\color{red}\textbf{0}}} & \underline{{\color{red}\textbf{0}}} & \underline{{\color{red}\textbf{0}}} & \underline{{\color{red}\textbf{0}}} & \underline{{\color{red}\textbf{0}}} & \underline{{\color{red}\textbf{0}}} & \underline{{\color{red}\textbf{0}}} & \underline{{\color{red}\textbf{0}}} & \underline{{\color{red}\textbf{0}}} & \underline{{\color{red}\textbf{1}}}\\

 0 & 0 & 0 & 0 & 0 & 0 & 0 & 0 & 0 & 0 & 1 & 0& 0 & 0 & 0 & 0 & 0 & |{\color{green}\textbf{1}} & {\color{green} \textbf{0}} & {\color{green}\textbf{0}} & {\color{green}\textbf{0}} & {\color{green}\textbf{0}} & {\color{green}\textbf{0}} & {\color{green}\textbf{0}} &|{\color{blue}\textbf{1}} & {\color{blue}\textbf{0}} & {\color{blue}\textbf{0}}\\
 0 & 0 & 0 & 0 & 0 & 0 & 0 & 0 & 0 & 0 & 0 & 1& 0 & 0 & 0 & 0 & 0 & |{\color{green}\textbf{0}} &{\color{green}\textbf{1}} & {\color{green}\textbf{0}} & {\color{green}\textbf{0}} & {\color{green}\textbf{0}} & {\color{green}\textbf{0}} & {\color{green}\textbf{0}} & |{\color{blue}\textbf{0}} & {\color{blue}\textbf{1}} & {\color{blue}\textbf{0}}\\
 0 & 0 & 0 & 0 & 0 & 0 & 0 & 0 & 0 & 0 & 0 & 0& 1 & 0 & 0 & 0 & 0 & |{\color{green}\textbf{0}} & {\color{green}\textbf{0}} & {\color{green}\textbf{1}} & {\color{green}\textbf{0}}  & {\color{green}\textbf{0}} & {\color{green}\textbf{0}} & {\color{green}\textbf{0}} & |{\color{blue}\textbf{0}} & {\color{blue}\textbf{0}} & {\color{blue}\textbf{1}}\\
 0 & 0 & 0 & 0 & 0 & 0 & 0 & 0 & 0 & 0 & 0 & 0& 0 & 1 & 0 & 0 & 0 & |{\color{green}\textbf{0}} & {\color{green}\textbf{0}} & {\color{green}\textbf{0}} & {\color{green}\textbf{1}} & {\color{green}\textbf{0}} & {\color{green}\textbf{0}} & {\color{green}\textbf{0}} & |{\color{blue}\textbf{1}} & {\color{blue}\textbf{0}} & {\color{blue}\textbf{0}}\\
0 & 0 & 0 & 0 & 0 & 0 & 0 & 0 & 0 & 0 & 0 & 0& 0 & 0 & 1 & 0 & 0 & |{\color{green}\textbf{0}} & {\color{green}\textbf{0}} & {\color{green}\textbf{0}} & {\color{green}\textbf{0}} & {\color{green}\textbf{1}} & {\color{green}\textbf{0}} & {\color{green}\textbf{0}} & |{\color{blue}\textbf{0}} & {\color{blue}\textbf{1}} & {\color{blue}\textbf{0}}\\
0 & 0 & 0 & 0 & 0 & 0 & 0 & 0 & 0 & 0 & 0 & 0& 0 & 0 & 0 & 1 & 0 & |{\color{green}\textbf{0}} & {\color{green}\textbf{0}} & {\color{green}\textbf{0}} & {\color{green}\textbf{0}} & {\color{green}\textbf{0}} & {\color{green}\textbf{1}} & {\color{green}\textbf{0}} & |\underline{{\color{blue}\textbf{0}}} & \underline{{\color{blue}\textbf{0}}} & \underline{{\color{blue}\textbf{1}}}\\
0 & 0 & 0 & 0 & 0 & 0 & 0 & 0 & 0 & 0 & 0 & 0& 0 & 0 & 0 & 0 & 1 & |{\color{green}\textbf{0}} & {\color{green}\textbf{0}} & {\color{green}\textbf{0}} & {\color{green}\textbf{0}} & {\color{green}\textbf{0}} & {\color{green}\textbf{0}} & {\color{green}\textbf{1}} & |\textbf{1} & \textbf{1} & \textbf{1}\\
 \end{array}
\right]$$
}
\end{small}
\caption{Encoding matrix for the MUICP in Example \ref{ex3}.}
\label{ex3matrix}
\end{figure*}
\end{example}
\begin{example}
\label{ex4}
Consider a one-sided neighboring antidote MUICP with $K=44, D=27$. 

For this index coding problem, we have $\lambda_1=17,\lambda_2=10,\lambda_3=7,\lambda_4=3,\lambda_5=1$ and $l=5$.

The encoding matrix $\mathbf{L}_{44 \times 17}$ for this index coding problem is obtained in the following steps $\mathbf{C}_{3 \times 1}\triangleq\mathbf{D}_{3 \times 1}^{(1)} 
 \rightarrow \mathbf{D}_{3 \times 7}^{(2)} \rightarrow \mathbf{D}_{10 \times 7}^{(3)} \rightarrow \mathbf{D}_{10 \times 17}^{(4)} \rightarrow \mathbf{L}_{44 \times 17} $ by concatenating suitable rectangular circulant matrices.
The encoding matrix $\mathbf{L}_{44 \times 17}$ is given in Fig. \ref{ex4matrix}.

\begin{figure*}
\arraycolsep=1.6pt
\setlength\extrarowheight{-3.0pt}
\begin{small}
{
$$L_{44 \times 17}=\left[
\begin{array}{ccccccccccccccccc}
1 & 0 & 0 & 0 & 0 & 0 & 0 & 0 & 0 & 0 & 0 & 0& 0 & 0 & 0 & 0 & 0\\
0 & 1 & 0 & 0 & 0 & 0 & 0 & 0 & 0 & 0 & 0 & 0& 0 & 0 & 0 & 0 & 0\\
0 & 0 & 1 & 0 & 0 & 0 & 0 & 0 & 0 & 0 & 0 & 0& 0 & 0 & 0 & 0 & 0\\
0 & 0 & 0 & 1 & 0 & 0 & 0 & 0 & 0 & 0 & 0 & 0& 0 & 0 & 0 & 0 & 0\\
0 & 0 & 0 & 0 & 1 & 0 & 0 & 0 & 0 & 0 & 0 & 0& 0 & 0 & 0 & 0 & 0\\
0 & 0 & 0 & 0 & 0 & 1 & 0 & 0 & 0 & 0 & 0 & 0& 0 & 0 & 0 & 0 & 0\\
0 & 0 & 0 & 0 & 0 & 0 & 1 & 0 & 0 & 0 & 0& 0 & 0 & 0 & 0 & 0 & 0\\
0 & 0 & 0 & 0 & 0 & 0 & 0 & 1 & 0 & 0 & 0 & 0& 0 & 0 & 0 & 0 & 0\\
0 & 0 & 0 & 0 & 0 & 0 & 0 & 0 & 1 & 0 & 0 & 0& 0 & 0 & 0 & 0 & 0\\
0 & 0 & 0 & 0 & 0 & 0 & 0 & 0 & 0 & 1 & 0 & 0& 0 & 0 & 0 & 0 & 0\\
0 & 0 & 0 & 0 & 0 & 0 & 0 & 0 & 0 & 0 & 1 & 0& 0 & 0 & 0 & 0 & 0\\
0 & 0 & 0 & 0 & 0 & 0 & 0 & 0 & 0 & 0 & 0 & 1& 0 & 0 & 0 & 0 & 0\\
0 & 0 & 0 & 0 & 0 & 0 & 0 & 0 & 0 & 0 & 0 & 0& 1 & 0 & 0 & 0 & 0\\
0 & 0 & 0 & 0 & 0 & 0 & 0 & 0 & 0 & 0 & 0 & 0& 0 & 1 & 0 & 0 & 0\\
0 & 0 & 0 & 0 & 0 & 0 & 0 & 0 & 0 & 0 & 0 & 0& 0 & 0 & 1 & 0 & 0\\
0 & 0 & 0 & 0 & 0 & 0 & 0 & 0 & 0 & 0 & 0 & 0& 0 & 0 & 0 & 1 & 0\\
0 & 0 & 0 & 0 & 0 & 0 & 0 & 0 & 0 & 0 & 0 & 0& 0 & 0 & 0 & 0 & 1\\
\hline
1 & 0 & 0 & 0 & 0 & 0 & 0 & 0 & 0 & 0 & 0 & 0& 0 & 0 & 0 & 0 & 0\\
0 & 1 & 0 & 0 & 0 & 0 & 0 & 0 & 0 & 0 & 0 & 0& 0 & 0 & 0 & 0 & 0\\
0 & 0 & 1 & 0 & 0 & 0 & 0 & 0 & 0 & 0 & 0 & 0& 0 & 0 & 0 & 0 & 0\\
0 & 0 & 0 & 1 & 0 & 0 & 0 & 0 & 0& 0  & 0 & 0 & 0& 0 & 0 & 0 & 0\\
0 & 0 & 0 & 0 & 1 & 0 & 0 & 0 & 0 & 0 & 0 & 0 & 0& 0 & 0 & 0 & 0\\
0 & 0 & 0 & 0 & 0 & 1 & 0 & 0 & 0 & 0 & 0 & 0& 0 & 0 & 0 & 0 & 0\\
0 & 0 & 0 & 0 & 0 & 0 & 1 & 0 & 0 & 0 & 0 & 0& 0 & 0 & 0 & 0 & 0\\
0 & 0 & 0 & 0 & 0 & 0 & 0 & 1 & 0 & 0 & 0 & 0& 0 & 0 & 0 & 0 & 0\\
0 & 0 & 0 & 0 & 0 & 0 & 0 & 0 & 1 & 0 & 0 & 0& 0 & 0 & 0 & 0 & 0\\
0 & 0 & 0 & 0 & 0 & 0 & 0 & 0 & 0 & 1 & 0 & 0& 0 & 0 & 0 & 0 & 0\\
0 & 0 & 0 & 0 & 0 & 0 & 0 & 0 & 0 & 0 & 1 & 0& 0 & 0 & 0 & 0 & 0\\
0 & 0 & 0 & 0 & 0 & 0 & 0 & 0 & 0 & 0 & 0 & 1& 0 & 0 & 0 & 0 & 0\\
0 & 0 & 0 & 0 & 0 & 0 & 0 & 0 & 0 & 0 & 0 & 0& 1 & 0 & 0 & 0 & 0\\
0 & 0 & 0 & 0 & 0 & 0 & 0 & 0 & 0 & 0 & 0 & 0& 0 & 1 & 0 & 0 & 0\\
0 & 0 & 0 & 0 & 0 & 0 & 0 & 0 & 0 & 0 & 0 & 0& 0 & 0 & 1 & 0 & 0\\
0 & 0 & 0 & 0 & 0 & 0 & 0 & 0 & 0 & 0 & 0 & 0& 0 & 0 & 0 & 1 & 0\\
0 & 0 & 0 & 0 & 0 & 0 & 0 & 0 & 0 & 0 & 0 & 0& 0 & 0 & 0 & 0 & 1\\
\hline
{\color{red}\textbf{1}} & {\color{red}\textbf{0}} & {\color{red}\textbf{0}} & {\color{red}\textbf{0}} & {\color{red}\textbf{0}} & {\color{red}\textbf{0}} & {\color{red}\textbf{0}} & {\color{red}\textbf{0}} & {\color{red}\textbf{0}} & {\color{red}\textbf{0}} & |\color{green}{\textbf{1}} &\color{green}{\textbf{0}}& \color{green}{\textbf{0}} & \color{green}{\textbf{0}} & \color{green}{\textbf{0}} & \color{green}{\textbf{0}} & \color{green}{\textbf{0}}\\
{\color{red}\textbf{0}} & {\color{red}\textbf{1}} & {\color{red}\textbf{0}} & {\color{red}\textbf{0}} & {\color{red}\textbf{0}} & {\color{red}\textbf{0}} & {\color{red}\textbf{0}} & {\color{red}\textbf{0}} & {\color{red}\textbf{0}} & {\color{red}\textbf{0}} & |\color{green}{\textbf{0}} & \color{green}{\textbf{1}}& \color{green}{\textbf{0}} & \color{green}{\textbf{0}} &\color{green}{\textbf{0}} & \color{green}{\textbf{0}} & \color{green}{\textbf{0}}\\
{\color{red}\textbf{0}} & {\color{red}\textbf{0}} & {\color{red}\textbf{1}} & {\color{red}\textbf{0}} & {\color{red}\textbf{0}} & {\color{red}\textbf{0}} & {\color{red}\textbf{0}} & {\color{red}\textbf{0}} & {\color{red}\textbf{0}} & {\color{red}\textbf{0}} & |\color{green}{\textbf{0}} & \color{green}{\textbf{0}}& \color{green}{\textbf{1}} & \color{green}{\textbf{0}} & \color{green}{\textbf{0}} & \color{green}{\textbf{0}} & \color{green}{\textbf{0}}\\
{\color{red}\textbf{0}} & {\color{red}\textbf{0}} & {\color{red}\textbf{0}} & {\color{red}\textbf{1}} & {\color{red}\textbf{0}} & {\color{red}\textbf{0}} & {\color{red}\textbf{0}} & {\color{red}\textbf{0}} & {\color{red}\textbf{0}} & {\color{red}\textbf{0}} & |\color{green}{\textbf{0}} & \color{green}{\textbf{0}}& \color{green}{\textbf{0}} & \color{green}{\textbf{1}} & \color{green}{\textbf{0}} & \color{green}{\textbf{0}} & \color{green}{\textbf{0}}\\
{\color{red}\textbf{0}} & {\color{red}\textbf{0}} & {\color{red}\textbf{0}} & {\color{red}\textbf{0}} & {\color{red}\textbf{1}} & {\color{red}\textbf{0}} & {\color{red}\textbf{0}} & {\color{red}\textbf{0}} & {\color{red}\textbf{0}} & {\color{red}\textbf{0}}& |\color{green}{\textbf{0}} & \color{green}{\textbf{0}}& \color{green}{\textbf{0}} & \color{green}{\textbf{0}} & \color{green}{\textbf{1}} & \color{green}{\textbf{0}} & \color{green}{\textbf{0}}\\
{\color{red}\textbf{0}} & {\color{red}\textbf{0}} & {\color{red}\textbf{0}} & {\color{red}\textbf{0}} & {\color{red}\textbf{0}} & {\color{red}\textbf{1}} & {\color{red}\textbf{0}} & {\color{red}\textbf{0}} & {\color{red}\textbf{0}} & {\color{red}\textbf{0}} & |\color{green}{\textbf{0}} & \color{green}{\textbf{0}}& \color{green}{\textbf{0}} & \color{green}{\textbf{0}} & \color{green}{\textbf{0}} & \color{green}{\textbf{1}} & \color{green}{\textbf{0}}\\
{\color{red}\textbf{0}} & {\color{red}\textbf{0}} & {\color{red}\textbf{0}} & {\color{red}\textbf{0}} & {\color{red}\textbf{0}} & {\color{red}\textbf{0}} & {\color{red}\textbf{1}} & {\color{red}\textbf{0}} & {\color{red}\textbf{0}} & {\color{red}\textbf{0}} & |\underline{\color{green}{\textbf{0}}} & \underline{\color{green}{\textbf{0}}}& \underline{\color{green}{\textbf{0}}} & \underline{\color{green}{\textbf{0}}} & \underline{\color{green}{\textbf{0}}} & \underline{\color{green}{\textbf{0}}} & \underline{\color{green}{\textbf{1}}}\\
{\color{red}\textbf{0}} & {\color{red}\textbf{0}} & {\color{red}\textbf{0}} & {\color{red}\textbf{0}} & {\color{red}\textbf{0}} & {\color{red}\textbf{0}} & {\color{red}\textbf{0}} & {\color{red}\textbf{1}} & {\color{red}\textbf{0}} & {\color{red}\textbf{0}} & |\color{blue}{\textbf{1}} & \color{blue}{\textbf{0}}& \color{blue}{\textbf{0}} & \color{blue}{\textbf{1}} & \color{blue}{\textbf{0}} & \color{blue}{\textbf{0}} & |\textbf{1}\\
{\color{red}\textbf{0}} & {\color{red}\textbf{0}} & {\color{red}\textbf{0}} & {\color{red}\textbf{0}} & {\color{red}\textbf{0}} & {\color{red}\textbf{0}} & {\color{red}\textbf{0}} & {\color{red}\textbf{0}} & {\color{red}\textbf{1}} & {\color{red}\textbf{0}} & |\color{blue}{\textbf{0}} & \color{blue}{\textbf{1}}& \color{blue}{\textbf{0}} & \color{blue}{\textbf{0}} & \color{blue}{\textbf{1}} & \color{blue}{\textbf{0}} & |\textbf{1}\\
{\color{red}\textbf{0}} & {\color{red}\textbf{0}} & {\color{red}\textbf{0}} & {\color{red}\textbf{0}} & {\color{red}\textbf{0}} & {\color{red}\textbf{0}} & {\color{red}\textbf{0}} & {\color{red}\textbf{0}} & {\color{red}\textbf{0}} & {\color{red}\textbf{1}} & |\color{blue}{\textbf{0}} & \color{blue}{\textbf{0}}& \color{blue}{\textbf{1}} & \color{blue}{\textbf{0}} & \color{blue}{\textbf{0}} & \color{blue}{\textbf{1}} & |\textbf{1}\\
\end{array}
\right]$$

}
\end{small}
\caption{Encoding matrix for the MUICP in Example \ref{ex4}.}
\label{ex4matrix}
\end{figure*}
\end{example}
\subsection{Main Results}
\begin{theorem}
\label{thm1}
Consider a symmetric MUICP with $K$ messages and $K$ receivers. Receiver $R_k$ wants the message $x_k$ and its antidotes are given by \eqref{antidote1}. For this index coding problem, the matrix $\mathbf{L}_{K \times (K-D)}$ given by {\bf CONSTRUCTION} as in Fig. \ref{fig1} and Fig. \ref{fig2} is an encoding matrix over every field $\mathbb{F}_q$ and the code generated is of optimal length.
\end{theorem}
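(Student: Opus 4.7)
The plan is to derive Theorem~\ref{thm1} from Lemmas~\ref{lemma2} and~\ref{lemma3} together with the capacity formula~\eqref{capacity1}. The theorem has two distinct assertions: (i) $\mathbf{L}_{K \times (K-D)}$ is a valid encoding matrix over every $\mathbb{F}_q$; and (ii) the resulting code has optimal length. I would address these separately, with essentially all the heavy lifting already done by Lemma~\ref{lemma3}.

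First, for validity, I would invoke Lemma~\ref{lemma2}: it suffices to show that every set of $K-D$ adjacent rows of $\mathbf{L}_{K\times(K-D)}$ is linearly independent over every $\mathbb{F}_q$, because this implies the condition~\eqref{ind} of Lemma~\ref{lemma1} for every $k$ and hence decodability at every receiver. This linear-independence property is precisely what Lemma~\ref{lemma3} delivers, applied inductively along the chain~\eqref{seq1} (Case~I) or~\eqref{seq2} (Case~II). At the terminal step $\mathbf{D}^{(l+1)} \to \mathbf{D}^{(l+2)} = \mathbf{L}_{K\times(K-D)}$, the construction concatenates several copies of $\mathbf{I}_{K-D}$ with the penultimate block $\mathbf{D}^{(l+1)}$; after possibly transposing so that what Lemma~\ref{lemma3} calls ``columns'' correspond to the rows of $\mathbf{L}$, the lemma guarantees linear independence of every $K-D$ adjacent rows over every $\mathbb{F}_q$. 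Because all subscripts are taken modulo~$K$ and the leading $K-D$ rows of $\mathbf{L}$ form $\mathbf{I}_{K-D}$, any cyclic window of $K-D$ consecutive rows is equivalent, after a cyclic shift that preserves linear independence, to an honest consecutive window already covered by Lemma~\ref{lemma3}.

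Second, for optimality, the capacity expression~\eqref{capacity1} gives $C=\tfrac{1}{K-D}$ per message whenever $D\le K-2$, so the optimal scalar-linear length is $K-D$ transmissions for $K$ messages. Since the constructed $\mathbf{L}$ has exactly $K-D$ columns, the code achieves capacity. The edge case $D=K-1$ degenerates to a $K\times 1$ encoder transmitting $x_1+\cdots+x_K$, which matches $C=1$; this is either handled as a trivial special case or by checking that the CONSTRUCTION reproduces it.

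The main obstacle is bookkeeping at the final step of the CONSTRUCTION: one must match the parity of $l$ with whether the last concatenation is by rows or by columns, so that Lemma~\ref{lemma3} is applied with the correct orientation in both Case~I and Case~II. A secondary but related point is verifying that the cyclic wrap-around (rows straddling the boundary between $\mathbf{I}_{K-D}$ and $\mathbf{D}^{(l+1)}$) is indeed captured by Lemma~\ref{lemma3}'s ``adjacent'' guarantee; this is precisely why the first block in the construction is chosen to be an identity matrix, ensuring that cyclic shifts reduce to ordinary linear-independence statements. Once these orientation and wrap-around issues are pinned down, Theorem~\ref{thm1} follows immediately by combining Lemmas~\ref{lemma2} and~\ref{lemma3} with~\eqref{capacity1}.
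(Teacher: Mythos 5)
Your proposal is correct and follows essentially the same route as the paper: the paper likewise establishes that every $K-D$ adjacent rows of $\mathbf{L}_{K\times(K-D)}$ are linearly independent by walking the chain \eqref{seq1} or \eqref{seq2} with Lemma~\ref{lemma3} (whose proof already covers the wrap-around windows you flag), then invokes Lemma~\ref{lemma1} for decodability and the capacity formula for optimality. The orientation issue you raise is exactly what the paper's split into Case~I ($l$ even) and Case~II ($l$ odd) resolves.
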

\begin{proof}
If $\lambda_l$ divides $\lambda_{l-1}$ and $l$ is an even integer, then the sequence of construction of the matrices is given by \eqref{seq1}. The matrix $\mathbf{C}_{\lambda_{l-1} \times \lambda_{l}}^{\mathsf{T}}$ is a rectangular circulant matrix and every adjacent $\lambda_l$ columns of this matrix are linearly independent. In the matrix $\mathbf{D}_{\lambda_{l-2} \times \lambda_{l-1}}^{(2)}$, every $\lambda_{l-1}$ rows are linearly independent (Definition \ref{def1}). Similarly, in the matrix $\mathbf{D}_{\lambda_{l-2} \times \lambda_{l-3}}^{(3)}$, every $\lambda_{l-2}$ columns are linearly independent (Lemma \ref{lemma3}). The sequence of construction of the matrices in \eqref{seq1} guarantees that in the matrix $\mathbf{L}_{K \times (K-D)}$ given in Fig. \ref{fig1} every $K-D$ adjacent rows are linearly independent over every field $\mathbb{F}_q$. Thus, for the given MUICP, the matrix $\mathbf{L}_{K \times (K-D)}$ given in Fig. \ref{fig1} is an encoding matrix over every field $\mathbb{F}_q$ according to Lemma \ref{lemma1}. 

If $\lambda_l$ divides $\lambda_{l-1}$ and $l$ is an odd integer, then the sequence of construction of the matrices is given by \eqref{seq2}. The matrix $\mathbf{C}_{\lambda_{l-1} \times \lambda_{l}}$ is a rectangular circulant matrix and every $\lambda_l$ adjacent rows are linearly independent.~~In the matrix $\mathbf{D}_{\lambda_{l-1} \times \lambda_{l-2}}^{(2)}$,~~every $\lambda_{l-1}$ columns are linearly independent (Definition \ref{def1}). Similarly, in the matrix $\mathbf{D}_{\lambda_{l-3} \times \lambda_{l-2}}^{(3)}$, every $\lambda_{l-2}$ rows are linearly independent (Lemma \ref{lemma3}). The sequence of construction of the matrices in \eqref{seq2} guarantees that in the matrix $\mathbf{L}_{K \times (K-D)}$ given in Fig. \ref{fig2} every $K-D$ adjacent rows are linearly independent over every field $\mathbb{F}_q$. Thus, for the given MUICP, the matrix $\mathbf{L}_{K \times (K-D)}$ given in Fig. \ref{fig2} is an encoding matrix over every field $\mathbb{F}_q$ according to Lemma \ref{lemma1}. 

The encoding matrix $\mathbf{L}_{K \times (K-D)}$ encodes $K$ messages  into $K-D$ code symbols. Thus, the rate achieved by this code is $\frac{1}{K-D}$ and is equal to the capacity of the given MUICP. Hence, the code generated by the encoding matrix in Fig. \ref{fig1} and Fig. \ref{fig2} is of optimal length. 
\end{proof}
\begin{remark}
In \cite{BiK}, Birk and Kol defined partial clique and gave a coding scheme for a given index coding problem based on the partial cliques of the side information graph. A directed graph $G(V,E)$ is a $k$-partial clique $Clq(s,k)$ iff $\vert V \vert=s$, outdeg$(v) \geq (s-1-k)$, $\forall \ v \in V$, and there exists a $\ v \in V$ such that outdeg$(v)=(s-1-k)$. It can be observed that the side information graph of one-sided MUICP with $K$ messages and $D$ neighboring antidotes is a $(K-D-1)$-partial clique. An optimal index code for this $(K-D-1)$-partial clique can be obtained by using a $K-D$ erasure correcting MDS code. However, by using MDS codes, the size of the field depends on the number of messages $K$. The encoding matrix $\mathbf{L}_{K \times (K-D)}$ in {\bf CONSTRUCTION} is an encoding matrix over every field. Hence, the optimal length of a one-sided neighboring antidote MUICP is independent of field size. However, when an MDS code exists as an index coding problem with $K$ messages and $D$ antidotes  the antidotes need not be neighboring antidotes. 
\end{remark}
\begin{remark}
The capacity achieving code is a scalar linear code for a one-sided neighboring  antidote MUICP. The optimal length of a scalar linear code is called the \textit{minrank} of side information graph \cite{YBJK}. The optimal length of one-sided MUICP is independent of field size. Thus, the \textit{minrank} of the side information graph of a one-sided neighboring  antidote MUICP is independent of field size.
\end{remark}
\begin{remark}
The index code construction in Theorem \ref{thm1} enables each receiver $R_k$ to decode not only its required message $x_k$ but also all its interfering messages in $\mathcal{I}_k$ (Lemma \ref{lemma2}).
\end{remark}
\begin{lemma}
Let $D$ be a positive integer. Consider a symmetric MUICP with $K$ messages and $K$ receivers. Receiver $R_k$ wants the message $x_k$ and its antidotes are given by 
\begin{align*}
{\cal K}_k =\{x_{k+1}, x_{k+2},\dots,x_{k+D}\}. 
\end{align*}
The encoding matrix constructed in Theorem \ref{thm1} for this index coding problem can be written as 
\begin{align*} 
\mathbf{L}_{K \times (K-D)}=\begin{bmatrix}
\mathbf{I}_{K-D}\\
\mathbf{P}\\
\end{bmatrix},
\end{align*}
where $\mathbf{P}=\mathbf{D}_{D \times (K-D)}$. Consider a symmetric MUICP with $K$ messages and $K$ receivers. Receiver $R_k$ wants the message $x_k$ and its antidotes are given by 
\begin{align}
\label{dual}
{\cal K}_k =\{x_{k+1}, x_{k+2},\dots,x_{k+K-D}\}. 
\end{align}
Then, an encoding matrix for this index coding problem can be given as 
\begin{align*} 
\mathbf{L}_{K \times D}=\begin{bmatrix}
\mathbf{I}_{D}\\
\mathbf{P}^{\mathsf{T}}\\
\end{bmatrix}.
\end{align*}
\end{lemma}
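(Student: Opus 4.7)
The plan is to invoke Lemma \ref{lemma1} with the antidote set in \eqref{dual}, for which the requirement on the $K \times D$ encoding matrix becomes: every $D$ adjacent rows must be linearly independent over every $\mathbb{F}_q$. To establish this for $\mathbf{L}_{K \times D} = \begin{bmatrix} \mathbf{I}_D \\ \mathbf{P}^{\mathsf{T}} \end{bmatrix}$, I would show that this matrix is exactly what CONSTRUCTION produces when applied to the dual parameters $(K, K-D)$, and then invoke Theorem \ref{thm1} for the dual problem.

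First, compare the two sequences generated by \eqref{chain}. For $(K, D)$ one has $\lambda_1 = (K-D)\bmod D$, $\lambda_2 = D \bmod \lambda_1$, and so on, terminating at $\lambda_l$. For the dual $(K, K-D)$ one has $\lambda_1' = D \bmod (K-D)$, $\lambda_2' = (K-D) \bmod \lambda_1'$, etc. A short case analysis shows that when $K-D > D$ the dual sequence is the original prepended with $D$, so $l' = l+1$, while when $K-D < D$ it is the original with its leading term removed, so $l' = l-1$. In either case the parity of $l$ flips, so the CONSTRUCTION invoked for the dual problem is Case II whenever the original uses Case I, and vice versa. The degenerate case $K-D = D$ gives $\mathbf{P} = \mathbf{I}_D$ and the statement is immediate.

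Next, I would prove by induction on the step index $t$ that the intermediate matrix $\mathbf{D}^{\prime(t)}$ produced by CONSTRUCTION for the dual problem is exactly the transpose of the corresponding intermediate matrix $\mathbf{D}^{(t)}$ for the original problem, under the index alignment forced by $l' - l = \pm 1$. The base case is immediate: the starting matrices are $\mathbf{C}_{\lambda_{l-1} \times \lambda_l}^{\mathsf{T}}$ (Case I) versus $\mathbf{C}_{\lambda_{l-1} \times \lambda_l}$ (Case II), which are transposes of each other. The inductive step exploits the fact that the two cases of CONSTRUCTION differ precisely by swapping ``concatenate to rows'' with ``concatenate to columns'' at every step and by toggling the transpose on the rectangular circulant block being appended; together these two changes implement exactly the transpose of the partially built matrix. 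Consequently, at the final step the bottom block $\mathbf{P}' = \mathbf{D}_{(K-D)\times D}^{\prime(l'+1)}$ of the dual encoding matrix equals $\mathbf{P}^{\mathsf{T}}$.

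Finally, Theorem \ref{thm1} applied to the dual problem guarantees that $\begin{bmatrix} \mathbf{I}_D \\ \mathbf{P}^{\mathsf{T}} \end{bmatrix}$ has every $D$ adjacent rows linearly independent over every $\mathbb{F}_q$, which by Lemma \ref{lemma1} makes it a valid encoding matrix for the antidote set \eqref{dual}. The main obstacle will be the careful index bookkeeping between the two sequences and a clean formulation of the induction that verifies the transpose relationship is preserved across the Case I / Case II swap and across the prepend/remove shift of the $\lambda$-sequence; the linear-algebraic content itself is essentially already carried by Lemma \ref{lemma3} and the rectangular circulant structure of the building blocks.
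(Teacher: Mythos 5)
Your proposal is correct in substance but takes a genuinely different, and considerably heavier, route than the paper. You re-run {\bf CONSTRUCTION} on the dual parameters $(K,K-D)$, observe that the $\lambda$-chain of the dual problem is the original chain shifted by one place (so the parity of $l$ flips and Case~I/Case~II swap), and then argue by induction that every intermediate matrix of the dual construction is the transpose of the corresponding intermediate matrix of the original, so that the dual output is exactly $\bigl[\mathbf{I}_{D};\,\mathbf{P}^{\mathsf{T}}\bigr]$ and Theorem~\ref{thm1} applies. The paper does none of this: it simply notes that $\mathbf{P}=\mathbf{D}_{D\times(K-D)}$ already has every $D$ adjacent \emph{columns} linearly independent --- a fact delivered directly by Lemma~\ref{lemma3} as part of building the original encoding matrix --- hence $\mathbf{P}^{\mathsf{T}}$ has every $D$ adjacent rows linearly independent, stacking $\mathbf{I}_{D}$ on top preserves this for the cyclic windows by the same triangular-block argument as in Lemma~\ref{lemma3}, and Lemma~\ref{lemma1} finishes. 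Your approach, if carried through, proves a stronger structural fact (the construction is self-dual under transposition and $D\leftrightarrow K-D$), which is nice but unnecessary here; the price is exactly the bookkeeping you flag, plus two points that need explicit care: (i) the degenerate terminations of the chain (e.g.\ when one of $D$, $K-D$ divides the other, so that $\lambda_1=0$ or the chain for one of the two problems is shorter than you assume), and (ii) the placement conventions (new identity blocks prepended on top in row concatenation versus on the left in column concatenation) must be checked to match under transposition, since otherwise the intermediate matrices agree only up to a cyclic permutation of rows/columns rather than exactly. Neither point is fatal, but both are avoided entirely by the paper's three-line argument.
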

\begin{proof}
In the matrix $\mathbf{P}$, every $D$ adjacent columns are linearly independent according to the construction procedure given in {\bf CONSTRUCTION}. Hence, in the matrix $\mathbf{P}^{\mathsf{T}}$, every $D$ adjacent rows are linearly independent. In the matrix $\mathbf{L}_{K \times D}$, every $D$ adjacent rows are linearly independent (Lemma \ref{lemma3}).   According to Lemma \ref{lemma1}, the matrix $\mathbf{L}_{K \times D}$ is an  encoding matrix for the MUICP with the antidotes given in \eqref{dual}.
\end{proof}

Consider the setting with $K$ receivers and $K$ messages and the $k$th receiver $R_{k}$ demanding the message $x_{k}$ having the side information given in \eqref{vantidotesa}. 
\begin{equation}
\label{vantidotesa}
{\cal K}_k= \{x_{k+t+1}, x_{k+t+2},\dots,x_{k+t+D}\}
\end{equation}
where $t \in \{0,1,2,\ldots,K-D-1\}$. If $t=0$, then this setting is equal to symmetric MUICP with neighboring side information. The cases of $t \neq 0$ corresponds to consecutive side-information but not neighboring.

\begin{lemma}
Consider a symmetric MUICP with $K$ messages and  $K$ receivers. Receiver $R_k$ wants the message $x_k$ and its antidotes are given by \eqref{vantidotesa}. For this index coding problem, the capacity is given by 
\begin{align*}
C=\frac{1}{K-D}~~\text{per~~message}.
\end{align*}
For this index coding problem, the matrix $\mathbf{L}_{K \times (K-D)}$ given by {\bf CONSTRUCTION} as in Fig. \ref{fig1} and Fig. \ref{fig2} is an encoding matrix over every field $\mathbb{F}_q$ and the generated code is of optimal length.
\end{lemma}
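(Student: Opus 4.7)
The proof naturally splits into achievability and converse. For achievability, the plan is to show that the very same matrix $\mathbf{L}_{K \times (K-D)}$ built by the CONSTRUCTION serves as an encoding matrix for this shifted problem. First, Lemma \ref{lemma1}'s derivation is structural and does not depend on which $D$ indices serve as antidotes: the receiver subtracts $x_{\mathcal{K}_k}L_{\mathcal{K}_k}$ from the received vector and is left with $z = x_k L_k + \sum_{i \in \mathcal{I}_k} x_i L_i$, so unique decodability of $x_k$ is equivalent to $L_k \notin \mathsf{span}(L_{\mathcal{I}_k})$, now with $\mathcal{I}_k = \{1,\ldots,K\}\setminus(\{k\}\cup\{k+t+1,\ldots,k+t+D\})$. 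The crucial combinatorial observation is that $\{k\}\cup\mathcal{I}_k$ is a single block of $K-D$ cyclically consecutive indices, namely $\{k+t+D+1,\,k+t+D+2,\,\ldots,\,k+t+D+(K-D)\}$ mod $K$: the removed set $\{k+t+1,\ldots,k+t+D\}$ forms a single cyclic arc of length $D$, and reinserting the demanded index $k$ fuses the two pieces of its complement into one contiguous cyclic arc of length $K-D$. Since Theorem \ref{thm1} establishes that every $K-D$ cyclically adjacent rows of $\mathbf{L}$ are linearly independent over every $\mathbb{F}_q$, the rows $\{L_k\}\cup\{L_i : i \in \mathcal{I}_k\}$ are linearly independent, and so the decoding criterion holds for every $R_k$.

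For the converse (and hence the capacity expression $C = \frac{1}{K-D}$), the plan is to invoke the cyclic symmetry of the setting: the shifted problem differs from the one-sided neighboring problem of \cite{MCJ} only by a uniform offset $t$ of the antidote block relative to the demanded message, while preserving the cyclic symmetry across receivers, the number of antidotes $D$, and the consecutiveness of the antidote block within $\mathbb{Z}_K$. The converse argument of \cite{MCJ} uses only these structural features, not the specific starting position of the block, so the bound $C \leq \frac{1}{K-D}$ carries over. Combined with the rate $\frac{1}{K-D}$ achieved by the CONSTRUCTION, this gives the claimed capacity, and the CONSTRUCTION code is of optimal length over every $\mathbb{F}_q$.

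The main obstacle is the converse step: although achievability follows cleanly from the cyclic-adjacency property of CONSTRUCTION together with the block-fusing observation above, a purely graph-theoretic MAIS lower bound on the side-information graph of the shifted problem does \emph{not} always evaluate to $K-D$ for arbitrary $t$, because the natural contiguous interval of length $K-D$ can contain wrap-around $2$-cycles $i \to j \to i$ when $t$ is sufficiently large. The cleanest route is therefore to port the information-theoretic converse of \cite{MCJ} directly, verifying explicitly that only the cyclic symmetry and the cardinality and consecutiveness of the antidote block are used, and not its specific offset; once this invariance is checked, the upper bound $\frac{1}{K-D}$ follows immediately for every admissible $t$.
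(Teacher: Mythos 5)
Your achievability half is correct and is essentially the paper's own argument: since $\{k\}\cup\mathcal{I}_k$ is the complement of the cyclic arc $\{k+t+1,\ldots,k+t+D\}$, it is itself a block of $K-D$ cyclically adjacent indices, so the adjacency property guaranteed by \textbf{CONSTRUCTION} (Lemma \ref{lemma3}) together with Lemma \ref{lemma1} shows that $\mathbf{L}_{K\times (K-D)}$ is a valid encoding matrix of length $K-D$ for every admissible $t$ over every $\mathbb{F}_q$.

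The converse step, however, contains a genuine gap, and the obstruction you yourself flagged (wrap-around $2$-cycles inside the length-$(K-D)$ interval) is not merely an inconvenience for the MAIS route --- it is fatal to the claim. The converse of \cite{MCJ} for $U=0$ is a chaining argument: from the codeword and the $D$ messages $x_{K-D+1},\ldots,x_K$ one decodes $x_{K-D},x_{K-D-1},\ldots,x_1$ in succession, and each step relies on the demanded message being \emph{adjacent} to its antidote block so that the message just decoded replenishes the side information needed at the next step. The offset $t$ destroys exactly this adjacency, so the bound does not ``carry over.'' Concretely, take $K=10$, $D=2$, $t=3$, so $\mathcal{K}_k=\{x_{k+4},x_{k+5}\}$. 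The five transmissions $x_i+x_{i+5}$ for $i=1,\ldots,5$ satisfy every receiver (receiver $i$ knows $x_{i+5}$ and receiver $i+5$ knows $x_{i+10}=x_i$), giving length $5<K-D=8$; the $2$-cycles $\{i,i+5\}$ you identified are precisely what this shorter code exploits. Hence $C\ge 1/5 > 1/(K-D)$, the capacity and optimality claims of the lemma are false for general $t$, and no correct converse exists to be found. Note also that the paper's own ``converse'' is not a proof either: it only observes that the \emph{constructed} code happens to let $R_k$ decode the neighboring message $x_{k+t}$, which is a property of one achievable scheme and implies nothing about all codes. Only the encoding-matrix (achievability) half of the statement survives.
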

\begin{proof}
The sequence of construction of the matrices in {\bf CONSTRUCTION} guarantees that in the matrix $\mathbf{L}_{K \times (K-D)}$ given in Fig. \ref{fig1} and Fig. \ref{fig2} every $K-D$ adjacent rows are linearly independent over every field $\mathbb{F}_q$ (Lemma \ref{lemma3}). Thus, for the given MUICP, the matrix $\mathbf{L}_{K \times (K-D)}$ given in Fig. \ref{fig1} and Fig. \ref{fig2} is an encoding matrix over every field $\mathbb{F}_q$ according to Lemma \ref{lemma1}. The index code construction enables the receiver $R_k$ to decode all other $K-D$ messages which are not in its antidotes. The reciver $R_k$ can decode the message $x_{k+t}$. The message $x_{k+t}$ is the neighboring message to the $D$ adjacent antidotes given in \eqref{vantidotesa}. Hence, the capacity of symmetric MUICP with consecutive antidotes can not be greater than the capacity of symmetric MUICP with neighboring antidotes ($C \leq \frac{1}{K-D}$). The matrix $\mathbf{L}_{K \times (K-D)}$ given in Fig. \ref{fig1} and Fig. \ref{fig2} is an encoding matrix for the given symmetric MUICP with consecutive antidotes. Hence, the capacity of this index coding problem is atleast $\frac{1}{K-D}$ ($C \geq \frac{1}{K-D}$). This completes the proof of capacity. The encoding matrix $\mathbf{L}_{K \times (K-D)}$ encodes $K$ messages into $K-D$ code symbols. Thus, the rate achieved by this code is $\frac{1}{K-D}$ and is equal to the capacity of the given MUICP. Hence, the code generated by the encoding matrix in Fig. \ref{fig1} and Fig. \ref{fig2} is of optimal length.  
\end{proof}

Consider the setting with $K$ receivers and $K$ messages and the $k$th receiver $R_{k}$ demanding the message $x_{k}$ having the side information given in \eqref{vantidotea}. 
\begin{equation}
\label{vantidotea}
{\cal K}_k= \{x_{k+t_k+1}, x_{k+t_k+2},\dots,x_{k+t_k+D}\}
\end{equation}
where $t_k \in \{0,1,2,\ldots,K-D-1\}$ and $k=\{1,2,\ldots,K\}$. If $t_1=t_2=\ldots=t_K$, then this setting is equal to symmetric MUICP with consecutive one-sided side information. If $t_1=t_2=\ldots=t_K=0$, then this setting is equal to symmetric MUICP with neighboring side information.
\begin{lemma}
Consider a MUICP with $K$ messages and  $K$ receivers. Receiver $R_k$ wants the message $x_k$ and its antidotes are given by \eqref{vantidotea}. In this index coding problem, the $D$ antidotes are not symmetric but consecutive. The matrix $\mathbf{L}_{K \times (K-D)}$ given by {\bf CONSTRUCTION} is an encoding matrix for this index coding problem.
\end{lemma}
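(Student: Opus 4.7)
The plan is to reduce the claim directly to the adjacency property of $\mathbf{L}_{K \times (K-D)}$ that was already established in Theorem \ref{thm1} (and Lemma \ref{lemma3}), namely that every $K-D$ cyclically adjacent rows of $\mathbf{L}$ are linearly independent over every field $\mathbb{F}_q$. The only new feature here is that different receivers shift their antidote window by different amounts $t_k$; no new algebraic property of $\mathbf{L}$ is required.

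First I would invoke the decoding criterion of Lemma \ref{lemma1}. For the present problem, after receiver $R_k$ subtracts the contributions of its antidotes $\mathcal{K}_k=\{x_{k+t_k+1},\ldots,x_{k+t_k+D}\}$, the residual vector equals
\begin{align*}
z_k \;=\; \sum_{j \in \{1,\ldots,K\}\setminus \mathcal{K}_k} x_j\,L_j,
\end{align*}
so $R_k$ can recover $x_k$ uniquely if and only if $L_k \notin \mathsf{span}\,\{L_j : j \in \{1,\ldots,K\}\setminus(\mathcal{K}_k \cup \{k\})\}$.

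Next I would make the key observation about index sets. The complement $\{1,\ldots,K\}\setminus \mathcal{K}_k$ (all indices not antidotes of $R_k$) is precisely the cyclic block
\begin{align*}
\{\,k+t_k+D+1,\;k+t_k+D+2,\;\ldots,\;k+t_k+K\,\} \pmod K,
\end{align*}
which has size $K-D$ and is a set of $K-D$ consecutive (adjacent) indices in the cyclic order on $\{1,\ldots,K\}$; moreover it contains $k$ itself (since $t_k\in\{0,\ldots,K-D-1\}$ places $k$ inside this block). Thus the $K-D$ rows of $\mathbf{L}$ appearing in the expression for $z_k$ are a set of adjacent rows in the sense defined in Section \ref{sec1}.

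Now I would apply Theorem \ref{thm1} (whose proof is driven by Lemma \ref{lemma3}): every $K-D$ adjacent rows of $\mathbf{L}_{K \times (K-D)}$ produced by the \textbf{CONSTRUCTION} are linearly independent over every field $\mathbb{F}_q$. Consequently the $K-D$ equations collected in $z_k$ have a unique solution in the $K-D$ unknowns indexed by the non-antidote block, and in particular $x_k$ is uniquely determined; equivalently $L_k\notin\mathsf{span}\,L_{\mathcal{I}_k}$. Applying this for every $k\in\{1,\ldots,K\}$ and invoking Lemma \ref{lemma1} shows that $\mathbf{L}_{K\times(K-D)}$ is an encoding matrix for the given problem. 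I do not expect any real obstacle here: the only nontrivial ingredient is the adjacency property of $\mathbf{L}$, which is exactly what the \textbf{CONSTRUCTION} and Lemma \ref{lemma3} deliver, and the translation of the per-receiver shifts $t_k$ into statements about cyclic adjacency of index blocks is pure index bookkeeping.
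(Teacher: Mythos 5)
Your proposal is correct and follows essentially the same route as the paper: both reduce the claim to the fact that every $K-D$ cyclically adjacent rows of $\mathbf{L}_{K\times(K-D)}$ are linearly independent (Lemma \ref{lemma3}/Theorem \ref{thm1}) and then invoke the decoding criterion of Lemma \ref{lemma1} (the paper also cites Lemma \ref{lemma2}). The only difference is that you spell out the index bookkeeping --- that the complement of $\mathcal{K}_k$ is a cyclic block of $K-D$ adjacent indices containing $k$ --- which the paper leaves implicit.
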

\begin{proof}
The sequence of construction of the matrices in {\bf CONSTRUCTION} guarantees that in the matrix $\mathbf{L}_{K \times (K-D)}$ given in Fig. \ref{fig1} and Fig. \ref{fig2} every $K-D$ adjacent rows are linearly independent over every field $\mathbb{F}_q$ (Lemma \ref{lemma3}). In this MUICP, every receiver has $D$ consecutive antidotes. Thus, for the given MUICP, the matrix $\mathbf{L}_{K \times (K-D)}$ given in Fig. \ref{fig1} and Fig. \ref{fig2} is an encoding matrix over every field $\mathbb{F}_q$ according to Lemma \ref{lemma1} and Lemma \ref{lemma2}. 
\end{proof}
\begin{remark}
The capacity of symmetric MUICP with neighboring antidotes is given in \eqref{capacity}. The $K$ messages in this index coding problem are taking values from a field uniformly and independent. Let $m$ be an integer relatively prime to $K$. Define the mapping $\pi$ as
\begin{align*}
\pi: k\rightarrow mk
\end{align*}
for $k=1,2,\ldots,K$.
This mapping creates a new index coding problem with $K$ messages and $K$ receivers. Receiver $R_{\pi(k)}$ wants the message $x_{\pi(k)}$ and its antidotes are given by 
\begin{align}
\label{antidoteper}
\mathcal{K}_k=\{x_{\pi(k+1)},x_{\pi(k+2)},\ldots,x_{\pi(k+D)}\}
\end{align}
This new index coding problem is not a consecutive antidote MUICP. We can generate $\phi(K)$(Euler's totient function) such new index coding problems. The side information graph of this new MUICP is isomorphic to the side information graph of symmetric one sided neighboring antidote MUICP. Hence, the capacity of the new MUICP is equal to that of the symmetric MUICP with neighboring antidotes. Let the rows of the matrix $\mathbf{L}_{K \times (K-D)}$ given by {\bf CONSTRUCTION} be $\{L_1,L_2,\ldots,L_K\}$. Define the matrix $\mathbf{L}_{K \times (K-D)}^{\pi}$ to be a $K \times (K-D)$ matrix with rows $\{L_{\pi(1)},L_{\pi(2)},\ldots,L_{\pi(K)}\}$. The matrix $\mathbf{L}_{K \times (K-D)}^{\pi}$ is an encoding matrix for the MUICP with antidotes given by \eqref{antidoteper}.
\end{remark}
\begin{example}
Consider a one-sided neighboring antidote MUICP with $K=7, D=3$. For this index coding problem, we have $\lambda_1=1$ and $l=1.$ The encoding matrix $\mathbf{L}_{7 \times 4}$ is given below.
\arraycolsep=3pt
\setlength\extrarowheight{-2.0pt}
{
$$\mathbf{L}_{7 \times 4}=\left[\begin{array}{*{20}c}
   1 & 0 & 0 & 0 \\
   0 & 1 & 0 & 0 \\
   0 & 0 & 1 & 0 \\
   0 & 0 & 0 & 1 \\
  \color{blue}{\textbf{1}} & \color{blue}{\textbf{0}} & \color{blue}{\textbf{0}} & \color{red}{\textbf{1}} \\
  \color{blue}{\textbf{0}} & \color{blue}{\textbf{1}} & \color{blue}{\textbf{0}} & \color{red}{\textbf{1}}\\
  \color{blue}{\textbf{0}} & \color{blue}{\textbf{0}} & \color{blue}{\textbf{1}} & \color{red}{\textbf{1}} \\
   \end{array}\right]$$
}
Let $m=2$. The mapping $\pi$ is given below.

\begin{table}[h]
\centering
\setlength\extrarowheight{2.2pt}
\begin{tabular}{|c|c|c|c|c|c|c|c|}
\hline
\textbf{$k$} & $1$ & $2$ & $3$ & $4$ & $5$ & $6$ & $7$\\
\hline
\textbf{$\pi(k)$} & $2$ & $4$ & $6$ & $1$ & $3$ & $5$ & $7$ \\
\hline  
\end{tabular}
\label{experm1}
\vspace{5pt}
\caption{}
\end{table}
\vspace{-10pt}
The new MUICP is described in the following table.
\begin{table}[h]
\centering
\setlength\extrarowheight{2.2pt}
\begin{tabular}{|c|c|c|}
\hline
\textbf{Rx} & {$\mathcal{W}_k$} & $\mathcal{K}_k$\\
\hline
\textbf{$R_1$} & $x_1$ & $x_3,x_{5},x_7$ \\
\hline
\textbf{$R_2$} & $x_2$& $x_{4},x_6,x_{1}$ \\
\hline 
\textbf{$R_3$} & $x_3$ & $x_{5},x_7,x_{2}$\\
\hline
\textbf{$R_4$} & $x_4$ & $x_6,x_{1},x_{3}$  \\
\hline
\textbf{$R_5$} & $x_5$ & $x_7,x_{2},x_4$  \\
\hline
\textbf{$R_6$} & $x_6$ & $x_{1},x_3,x_{5}$  \\
\hline
\textbf{$R_7$} & $x_7$ & $x_{2},x_4,x_{6}$  \\
\hline 
\end{tabular}
\vspace{5pt}
\caption{}
\label{experm2}
\end{table}
\vspace{-10pt}

The encoding matrix $\mathbf{L}_{7 \times 4}^{\pi}$ for the new MUICP defined in Table \ref{experm2} is given below.
\arraycolsep=3pt
\setlength\extrarowheight{-2.0pt}
{
$$\mathbf{L}_{7 \times 4}^{\pi}=\left[\begin{array}{*{20}c}
   0 & 0 & 0 & 1 \\
   1 & 0 & 0 & 0 \\
   \color{blue}{\textbf{1}} & \color{blue}{\textbf{0}} & \color{blue}{\textbf{0}} & \color{red}{\textbf{1}} \\
   0 & 1 & 0 & 0 \\
     \color{blue}{\textbf{0}} & \color{blue}{\textbf{1}} & \color{blue}{\textbf{0}} & \color{red}{\textbf{1}}\\
   0 & 0 & 1 & 0 \\
  \color{blue}{\textbf{0}} & \color{blue}{\textbf{0}} & \color{blue}{\textbf{1}} & \color{red}{\textbf{1}} \\
   \end{array}\right]$$
}
\end{example}
\subsection{Optimal length index codes for two-sided antidote MUICPs}
 In \cite{MRarXiv}, we proposed a vector linear index code construction which constructs a vector linear index code for two-sided  neighboring antidote MUICPs starting from a given one-sided neighboring antidote MUICP with a known scalar linear index code. The construction given in Theorem \ref{thm1} along with our construction in \cite{MRarXiv} gives a capacity achieving vector linear index codes for two-sided neighboring antidote problems for every $K,U$ and $D$.  If a scalar linear index code $\mathfrak{C}$ for the one-sided antidote problem is defined in the field $\mathbb{F}_q$, the construction procedure in \cite{MRarXiv} gives the construction of vector linear code for the two-sided antidote index coding problem in the same field $\mathbb{F}_q$. Thus, the construction given in Theorem \ref{thm1} along with our construction in \cite{MRarXiv} gives capacity achieving  vector linear codes for two-sided neighboring antidote problems for every $K,U$ and $D$. The index codes so constructed are independent of field size.

\section{conclusion}
In this paper, a capacity achieving scalar linear coding scheme is proposed for one-sided neighboring antidotes MUICPs. Some of the interesting directions of further research are as follows:
\begin{itemize}
\item Recently, it has been observed that in a noisy index coding problem it is desirable for the purpose of reducing the probability of error that the receivers use as small a number of transmissions from the source as possible and linear index codes with this property have been reported in \cite{TRCR}, \cite{KaR}. While the report \cite{TRCR} considers fading broadcast channels, in \cite{AnR1} and \cite{AnR2} AWGN channels are considered and it is reported that linear index codes with minimum length (capacity achieving codes or optimal length codes) help to facilitate to achieve more reduction in probability of error compared to non-minimum length codes for receivers with large amount of side-information. These aspects remain to be investigated for the constructed class of scalar linear codes.
\item In this paper, we proved that for the one-sided neighboring antidote symmetric MUICPs, the minrank of the side information graph is independent of the field size. We conjecture that for any unicast index coding problem with independent messages, the minrank of the side information graph is independent of the field size.
\end{itemize} 
\section*{Acknowledgment}
The authors would like to thank Lakshmi Prasad Natarajan for useful suggestions and pointing out a mistake in Lemma \ref{lemma3} in the previous version.

This work was supported partly by the Science and Engineering Research Board (SERB) of Department of Science and Technology (DST), Government of India, through J.C. Bose National Fellowship to B. Sundar Rajan.


\begin{thebibliography}{160}
\bibitem{MCJ}
H. Maleki, V. Cadambe, and S. Jafar, ``Index coding – an interference alignment perspective", in IEEE \textit{Trans. Inf. Theory,}, vol. 60, no.9, pp.5402-5432, Sep. 2014.

\bibitem{BiK}
Y. Birk and T. Kol, ``Informed-source coding-on-demand (ISCOD) over broadcast channels", in \textit{Proc. IEEE Conf. Comput. Commun.}, San Francisco, CA, 1998, pp. 1257-1264.

\bibitem{YBJK}
Z. Bar-Yossef, Z. Birk, T. S. Jayram and T. Kol, ``Index coding with side information", in \textit{Proc. 47th Annu. IEEE Symp. Found. Comput. Sci.}, Oct. 2006, pp. 197-206.

\bibitem{OnH}
L. Ong and C. K. Ho, ``Optimal Index Codes for a Class of Multicast Networks with Receiver Side Information'', in \textit{Proc. IEEE ICC}, Ottawa, Canada, June 2012, pp. 2213-2218.
\bibitem{TRCR}
A. Thomas, R. Kavitha, A. Chandramouli, and B. S. Rajan,
``Optimal Index Coding with Min-Max Probability of Error over Fading,'' in \textit{Proc. IEEE PIMRC}, Hong Kong, August 2015, pp.889-894.
\bibitem{minrank}
R. Peeters, “Orthogonal representations over finite fields and the chromatic number of graphs,” \textit{Combinatorica}, vol. 16, no. 3, Sept 1996, pp. 417–431.

\bibitem{AnR1}
A. M. Anjana and B. S. Rajan, ``Index Coded PSK Modulation,'' in \textit{Proc. IEEE WCNC}, Doha, Qatar, April 2016, pp. 1890-1896.

\bibitem{AnR2}
A. M. Anjana and B. S. Rajan, ``Noisy Index Coding with PSK and QAM,'' arXiv:1603.03152v1, 10, Mar. 2016.

\bibitem{KaR}
R. Kavitha and B. S. Rajan,
``On the number of optimal index codes,''
in \textit{Proc. IEEE ISIT}, Hong Kong, June 2015, pp. 1044-1048.

\bibitem{MRRarXiv}
M. B. Vaddi, R. K. Bhattaram and B. S. Rajan, ``Optimal Scalar Linear Index Codes for Some Symmetric Multiple Unicast Problems,'' arXiv:1510.05435v1 [cs.IT] 19, Oct' 2015.

\bibitem{MRarXiv}
M. B. Vaddi and B. S. Rajan, ``Optimal Vector Linear Index Codes for Some Symmetric Multiple Unicast Problems,'' in \textit{Proc. IEEE ISIT}, Barcelona, Spain, July 2016.
\end{thebibliography}
\end{document}